\newcommand{\qed}{\hfill \ensuremath{\Box}}
\newcommand{\argmin}{\operatornamewithlimits{argmin}} 
\begin{document}

\newtheorem{definition}{Definition}
\newtheorem{coro}{Corollary}
\newtheorem{conj}{Conjecture}
\newtheorem{thm}{Theorem}
\newtheorem{prf}{Proof}
\newtheorem{lma}{Lemma}
\newtheorem{prp}{Proposition}
\newcommand{\R}{\mathbb{R}}
\newcommand{\E}{\mathbb{E}}
\newcommand{\PP}{\mathbb{P}}
\newcommand{\I}{\mathbb{I}}
\newcommand{\lam}{\lambda}
\newcommand{\bs}{\boldsymbol}
\newcommand{\br}{\mathrm{}}
\newcommand{\ph}{{\hat \pi}}
\newcommand{\xth}{\hat{x}_t}
\newcommand{\be}{\begin{equation}}
\newcommand{\ee}{\end{equation}}
\newcommand{\bea}{\begin{eqnarray}}
\newcommand{\eea}{\end{eqnarray}}
\newcommand{\bfl}{\begin{flalign}}
\newcommand{\efl}{\end{flalign}}
\newcommand{\bfc}{\begin{figure}\begin{center}}
\newcommand{\efc}{\end{center}\end{figure}}
\newcommand{\n}{\nonumber}
\newcommand{\dd}{\mathrm{d}}
\newcommand{\nin}{\noindent}
\newcommand{\mc}{\mathcal}
\newcommand{\mr}{\mathrm}
\newcommand{\ds}{\displaystyle}
\newcommand{\la}{\leftarrow}
\newcommand{\uga}{\underline{ga}}
\newcommand{\oga}{\overline{ga}}
\newcommand{\wt}{\widetilde}

\title{Greedy Online Bipartite Matching on Random Graphs\footnote{Research supported in part by NSF grant 1029603 and ONR grant N00014-12-1-0033. The first author is supported in part by a NSF graduate research fellowship.}}
\author{Andrew Mastin\thanks{Laboratory for Information and Decision Systems, Department of Electrical Engineering and Computer Science, Massachusetts Institute of Technology, Cambridge, MA 02139, USA; {\tt mastin@mit.edu}},
Patrick Jaillet\thanks{Laboratory for Information and Decision Systems, Department of Electrical Engineering and Computer Science and Operations Research Department, Massachusetts Institute of Technology, Cambridge, MA 02139, USA; {\tt jaillet@mit.edu}
}}
\maketitle
\begin{abstract}
We study the average performance of online greedy matching algorithms on $G(n,n,p)$, the random bipartite graph with $n$ vertices on each side and edges occurring independently with probability $p=p(n)$. In the online model, vertices on one side of the graph are given up front while vertices on the other side arrive sequentially; when a vertex arrives its edges are revealed and it must be immediately matched or dropped. We begin by analyzing the \textsc{oblivious} algorithm, which tries to match each arriving vertex to a random neighbor, even if the neighbor has already been matched. The algorithm is shown to have a performance ratio of at least $1-1/e$ for all monotonic functions $p(n)$, where the performance ratio is defined asymptotically as the ratio of the expected matching size given by the algorithm to the expected maximum matching size. Next we show that the conventional \textsc{greedy} algorithm, which assigns each vertex to a random unmatched neighbor, has a performance ratio of at least $0.837$ for all monotonic functions $p(n)$. Under the $G(n,n,p)$ model, the performance of \textsc{greedy} is equivalent to the performance of the well known \textsc{ranking} algorithm, so our results show that \textsc{ranking} has a performance ratio of at least $0.837$. We finally consider vertex-weighted bipartite matching. Our proofs are based on simple differential equations that describe the evolution of the matching process.

\end{abstract}

\section{Introduction}
In the online bipartite matching problem, we are given a bipartite graph $G=(I,J,E)$ where $I$ is a set of $n$ bins and $J$ is a set of $n$ balls. Balls arrive in an online fashion; when a ball $j \in J$ arrives, its edges are revealed and it must either be matched with an unmatched neighboring bin or dropped (left unmatched). Each bin may be matched to at most one ball and decisions are irrevocable. The goal is to maximize the number of matched balls.

The problem has received significant attention due to applications in Internet advertising as well as under streaming models of computation, which places limits on memory utilization for processing large datasets \cite{mehta07,raghavan99, feigenbaum05} . From a worst-case perspective, it is well known that the \textsc{greedy} algorithm, which matches each ball to a random unmatched neighboring bin (if possible), always achieves a matching size that is at least as large as 1/2 of the size of the maximum matching. The \textsc{ranking} algorithm of Karp, Vazirani, and Vazirani picks a random permutation of bins up front and matches each ball to its unmatched neighboring bin that is ranked highest in the permutation \cite{karp90}. \textsc{ranking} guarantees a matching size at least $1-1/e$ of the size of the maximum matching in expectation, which is the best possible worst-case guarantee.

Surprisingly, the average-case performance of these algorithms has been largely overlooked. In this paper we study the performance of greedy-type algorithms under the conventional $G(n,n,p)$ model, where we are given a bipartite graph with $n$ vertices on each side and where each potential edge $(i,j) \in I \times J$ occurs independently with probability $p$. We are interested in the asymptotic behavior of the matching sizes obtained by these algorithms for monotonic functions $p=p(n)$. We will focus on functions of the form $p(n) =c/n$ for some constant $c>0$, which we will show is the region where greedy algorithms are most sensitive to $p(n)$.

We start by analyzing the \textsc{oblivious} algorithm, which has knowledge of a ball's edges when it arrives but does not know which neighboring bins are occupied. The algorithm picks a random neighboring bin; if the bin is unmatched, then the ball is matched, otherwise it is dropped. This rule models load balancing scenarios where a central server knows which machines can process a given job, but is not aware of the machine availability. Next we analyze \textsc{greedy} and \textsc{ranking}, showing that under the $G(n,n,p)$ model, the performance of the two algorithms is equivalent. Finally, we extend our analysis to the vertex-weighted matching problem, where each bin has a weight and the greedy algorithm matches each ball to the available neighboring bin with largest weight.

Our proofs employ differential equations that describe the evolution of the number of matched bins throughout the arrival process. The differential equations are easily obtained by calculating the expected change in number of occupied bins upon the arrival of each ball. As $n\rightarrow \infty$, this gives a deterministic differential equation and solution. The solution is shown to closely model the stochastic behavior due to a general result of Wormald \cite{wormald99}. We believe that this is one of the most simple applications of Wormald's theorem. In fact, our proofs are nearly as simple as the worst-case proof for \textsc{ranking} of Devanur, Jain, and Kleinberg \cite{djk13}.

In the remainder of this section we briefly state our main results for the \textsc{oblivious} and \textsc{greedy} algorithms and discuss related work. The analysis of \textsc{oblivious} and \textsc{greedy} is given in Section \ref{analysis}. We extend the analysis to vertex-weighted matching in Section \ref{vertexweightedmatching} and a conclusion is given in Section \ref{conclusion}.

\subsection{Results.}
Let $\mu_A(n,n,p(n))$ denote the matching size (number of matched balls) obtained by some algorithm $\mathrm{A}$ on the graph $G(n,n,p)$ where $p=p(n)$; likewise define the maximum matching size as $\mu_*(n,n,p(n))$. Define a \textit{valid} function $p(n)$ as a function that for all $n>0$, is monotonic and satisfies $0 < p(n) < 1$ (thus implying that $0 \le \lim_{n \rightarrow \infty} p(n) \le 1$). We show expressions for the matching sizes obtained by the algorithms that hold asymptotically almost surely (a.a.s., meaning with probability $1-o(1)$), as well as bounds on the performance ratio. Define the performance ratio for algorithm $\mathrm{A}$ given $p(n)$ as
\bea
\mathcal{R}_{\mathrm{A}}(p(n)):=\lim_{n \rightarrow \infty} \frac{\E[\mu_\mathrm{A}(n,n,p(n))]}{\E[\mu_*(n,n,p(n))]}.
\eea
The performance ratio is essentially the analog of the competitive ratio adapted for average-case analysis. Throughout the rest of the paper, we define $\gamma_* = \gamma_*(c)$ as the smallest root of the equation $x = c\exp(-ce^{-x})$, and $\gamma^* = \gamma^*(c) = c e^{-\gamma_*}$. Our results for $\textsc{oblivious}$ and $\textsc{greedy}$ are listed as follows.

\begin{thm}
\label{obliviousthm}
Let $\mu_{\mathrm{O}}(n,n,c/n)$ denote the matching size obtained by the \textsc{oblivious} algorithm on the graph $G(n,n,p)$, where $p=c/n$ and $c>0$ is a constant. Then a.a.s.,
\bea
\frac{\mu_{\mathrm{O}}(n,n,c/n)}{n} = 1-e^{(e^{-c}-1)} + o(1).
\eea
\end{thm}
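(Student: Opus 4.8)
The plan is to track the number of occupied (matched) bins as the balls arrive, extract a deterministic differential equation for its scaled trajectory, and then invoke Wormald's theorem to transfer the deterministic solution to the stochastic process.

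First I would let $M_t$ denote the number of occupied bins immediately after the $t$-th ball has been processed, with $M_0 = 0$, and compute the expected one-step increment $\E[M_{t+1} - M_t \mid M_t = m]$. The increment equals $1$ if the arriving ball is matched and $0$ otherwise, so what is needed is the probability that \textsc{oblivious} matches ball $t+1$ given $m$ occupied bins. Writing $U \sim \mathrm{Bin}(n-m, p)$ for its number of unoccupied neighbors and $O \sim \mathrm{Bin}(m, p)$ for its number of occupied neighbors (independent, since edges are i.i.d.), and setting $N = U + O \sim \mathrm{Bin}(n,p)$ for the total neighbor count, the algorithm chooses a uniformly random neighbor and therefore succeeds with probability $U/N$ when $N \ge 1$ and fails when $N = 0$. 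Thus the match probability is $\E[(U/N)\,\mathbf{1}\{N \ge 1\}]$.

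The key observation, and the step that collapses the whole calculation, is that conditioned on $N = k \ge 1$ the variable $U$ is hypergeometric with mean $k(n-m)/n$, so $\E[U/N \mid N=k] = (n-m)/n$, independent of $k$. Summing over $k \ge 1$ yields the exact identity
\[
\Pr[\text{match} \mid M_t = m] = \Big(1 - \frac{m}{n}\Big)\big(1 - (1-p)^n\big).
\]
With $p = c/n$ this is $(1 - m/n)(1 - e^{-c}) + o(1)$ uniformly in $m$. Hence, writing $z = m/n$ and scaling time by $\tau = t/n$, the expected trend is governed by $f(z) = (1-z)(1-e^{-c})$, a bounded Lipschitz function of $z$ on $[0,1]$, and the associated differential equation $\mathrm{d}z/\mathrm{d}\tau = (1-z)(1-e^{-c})$ with $z(0)=0$ has solution $z(\tau) = 1 - e^{-(1-e^{-c})\tau}$.

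Finally I would apply Wormald's theorem to conclude that $M_t/n$ a.a.s.\ follows $z(t/n)$ uniformly in $t$, with error $o(1)$. The hypotheses are routine to check: the per-step change is bounded by $1$ (boundedness), $f$ is Lipschitz on the relevant domain, and the identity above supplies the trend condition, the required uniform $o(1)$ error coming from replacing $(1-c/n)^n$ by $e^{-c}$. Evaluating the solution at $\tau = 1$, when all $n$ balls have arrived, gives $M_n/n = z(1) + o(1) = 1 - e^{(e^{-c}-1)} + o(1)$ a.a.s., as claimed. I expect the only delicate point to be verifying the trend and boundedness conditions of Wormald's theorem precisely enough, in particular controlling the $(1-c/n)^n \to e^{-c}$ error uniformly and confirming that concentration persists across all $n$ steps; the probabilistic heart of the argument, the hypergeometric mean computation, is exact and requires no approximation.
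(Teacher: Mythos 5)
Your proposal is correct and follows essentially the same route as the paper: both derive the exact one-step match probability $(1-(1-p)^n)(1-m/n)$ (the paper asserts it directly from independence of edges, you justify it via the hypergeometric conditional mean, which is the same fact spelled out), pass to the ODE $dz/d\tau=(1-e^{-c})(1-z)$, and invoke Wormald's theorem with the $(1-c/n)^n\to e^{-c}$ error controlling the trend hypothesis. The only difference is that the paper makes the Wormald parameters explicit ($\beta=1$, $\lambda_1=c/(en)$, $\lambda=cn^{-1/4}$), which you correctly flag as the remaining routine work.
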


\begin{thm} The performance ratio $\mc{R}_{\mathrm{O}}(p(n))$  of \textsc{oblivious} on $G(n,n,p)$, for all valid functions $p = p(n)$, satisfies

\be
\begin{array}{rll}
\mc{R}_{\mr{O}}(p(n)) &= 1, & p(n) = o(1/n), \\
\mc{R}_{\mr{O}}(p(n)) &\ge \ds \frac{c-ce^{(e^{-c}-1)}}{2c-(\gamma^*+\gamma_*+\gamma^* \gamma_*)}, & p(n) = c/n, \\
\mc{R}_{\mr{O}}(p(n)) &= 1-\frac{1}{e}, & p(n) = \omega(1/n).
\end{array}
\ee
\label{roblivious}
\end{thm}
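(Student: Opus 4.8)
The plan is to evaluate $\mc{R}_{\mr{O}}(p(n)) = \lim_{n\to\infty}\E[\mu_{\mr{O}}]/\E[\mu_*]$ by estimating numerator and denominator separately in each of the three regimes. For the numerator I would rerun the differential-equation analysis behind Theorem \ref{obliviousthm}, but keeping $p$ general. Let $X_t$ be the number of occupied bins after $t$ balls have arrived. An arriving ball has at least one neighbor with probability $1-(1-p)^n$, and, conditioned on this, its uniformly chosen neighbor is (by symmetry over the bins) a uniformly random bin, hence occupied with probability $X_t/n$; therefore $\E[X_{t+1}-X_t\mid X_t] = (1-(1-p)^n)(1-X_t/n)$. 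Writing $a=1-(1-p)^n$ and $\tau=t/n$, Wormald's theorem gives the deterministic trajectory $x(\tau)=1-e^{-a\tau}$, so a.a.s. $\mu_{\mr{O}}/n = 1-e^{-a}+o(1)$. Since $\mu_{\mr{O}}\le n$, this transfers to $\E[\mu_{\mr{O}}]/n = 1-e^{-a}+o(1)$.

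The denominator requires the asymptotic maximum matching size, and this is where $\gamma_*,\gamma^*$ enter. I would establish (or cite) the a.a.s. upper bound $\mu_*(n,n,c/n)/n \le 2-(\gamma^*+\gamma_*+\gamma^*\gamma_*)/c + o(1)$, with $\gamma_*,\gamma^*$ the fixed points specified before the theorem; together with $\mu_*\le n$ this yields $\E[\mu_*] \le n\big(2-(\gamma^*+\gamma_*+\gamma^*\gamma_*)/c\big)+o(n)$. For $p=c/n$ we have $a=1-(1-c/n)^n\to 1-e^{-c}$, so $\E[\mu_{\mr{O}}]/n\to 1-e^{(e^{-c}-1)}$; dividing the numerator estimate by the denominator bound gives $\mc{R}_{\mr{O}}(c/n)\ge \frac{c-ce^{(e^{-c}-1)}}{2c-(\gamma^*+\gamma_*+\gamma^*\gamma_*)}$, the inequality arising precisely because only an upper bound on $\E[\mu_*]$ is used.

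The two extreme regimes I would handle directly. When $p=o(1/n)$ the expected number of edges is $pn^2=o(n)$ while the expected number of pairs of edges sharing an endpoint is $O(p^2n^3)=o(pn^2)$; hence a.a.s. almost every edge is isolated, so both \textsc{oblivious} and a maximum matching cover $(1+o(1))pn^2$ balls, and since $\mu_{\mr{O}}\le\mu_*$ always we get $\mc{R}_{\mr{O}}\to 1$. When $p=\omega(1/n)$ we have $a=1-(1-p)^n\to 1$, so the numerator analysis gives $\E[\mu_{\mr{O}}]/n\to 1-e^{-1}$; for the denominator, a standard fact is that a maximum matching a.a.s. covers all but $o(n)$ vertices once $pn\to\infty$ (obtainable from Hall-type expansion bounds, or by edge-monotonicity from the $c/n$ matching asymptotic letting $c\to\infty$, since $2-(\gamma_*(c)+\gamma^*(c)+\gamma_*(c)\gamma^*(c))/c\to 1$), so $\E[\mu_*]/n\to 1$ and $\mc{R}_{\mr{O}}=1-1/e$.

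The main obstacle is the maximum-matching estimate in the $p=c/n$ regime: obtaining the denominator with the exact constants $\gamma_*,\gamma^*$ is the one step that goes beyond the elementary drift computation, since it rests on the recursive (cavity-type) fixed-point equation $x=c\exp(-ce^{-x})$ governing leaf-removal on the local Poisson$(c)$ tree limit of $G(n,n,c/n)$, rather than on a single scalar ODE. Everything else follows from Theorem \ref{obliviousthm}, Wormald's theorem, the crude bound $\mu_{\mr{O}}\le\mu_*\le n$ for passing between a.a.s. statements and expectations, and monotonicity of the matching number in the number of edges.
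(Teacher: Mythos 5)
Your proposal is correct and follows essentially the same route as the paper: the $p=c/n$ case combines Theorem \ref{obliviousthm} (transferred to expectations via boundedness/uniform integrability) with the Bollob\'as--Brightwell bound of Theorem \ref{matchopt}, and the sparse case is the same isolated-edge count. The only divergence is in the $p=\omega(1/n)$ regime, where the paper avoids rerunning the ODE by directly bounding, for a fixed bin, the probability that no ball ever attempts a match to it (giving $\E[\mu_{\mathrm{O}}]\ge n(1-1/e)$ against the trivial $\mu_*\le n$), whereas you let $a=1-(1-p)^n\to 1$ in the drift and separately invoke near-perfect matchings; both work.
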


\begin{coro}
The performance ratio of \textsc{oblivious} on $G(n,n,p)$ for all valid functions $p=p(n)$ satisfies
\be
\mc{R}_{\mathrm{O}}(p(n)) \ge 1-\frac{1}{e}.
\ee
\label{rboundoblivious}
\end{coro}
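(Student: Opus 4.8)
The plan is to derive the corollary directly from Theorem~\ref{roblivious}, which already splits the analysis into the three regimes of $p(n)$. In the sparse regime $p(n)=o(1/n)$ the theorem gives $\mc{R}_{\mr{O}}=1\ge 1-\frac1e$, and in the dense regime $p(n)=\omega(1/n)$ it gives exactly $\mc{R}_{\mr{O}}=1-\frac1e$; both bounds are immediate. Hence everything reduces to the critical window $p(n)=c/n$, where it remains to prove the single-variable inequality
\[
g(c):=\frac{c-ce^{(e^{-c}-1)}}{2c-(\gamma^*+\gamma_*+\gamma^*\gamma_*)}\ \ge\ 1-\frac1e\qquad\text{for all }c>0 .
\]

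The obstacle is that $\gamma_*$ and $\gamma^*$ are given only implicitly through $x=c\exp(-ce^{-x})$. First I would record the fixed-point relations following from the definitions, namely $\gamma^*=ce^{-\gamma_*}$ and $\gamma_*=ce^{-\gamma^*}$, together with the symmetry identity $\gamma^*e^{-\gamma^*}=\gamma_*e^{-\gamma_*}$ (equivalently $\ln\gamma^*-\gamma^*=\ln\gamma_*-\gamma_*$), obtained by taking logarithms of the two relations and subtracting. These let me rewrite the maximum-matching fraction $(2c-(\gamma^*+\gamma_*+\gamma^*\gamma_*))/c$ as $2-e^{-\gamma_*}-e^{-\gamma^*}-\gamma^*e^{-\gamma^*}$ and, crucially, differentiate the pair implicitly: solving the resulting linear system gives $\gamma_*'=(e^{-\gamma^*}-\gamma_*e^{-\gamma_*})/(1-\gamma_*\gamma^*)$ and the symmetric expression for $\gamma^{*\prime}$, which are well defined once one verifies the non-degeneracy condition $\gamma_*\gamma^*<1$.

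With these tools I would next pin down the two endpoints: a short expansion shows $g(c)\to 1$ as $c\to0^+$, while as $c\to\infty$ the oblivious fraction $1-e^{(e^{-c}-1)}\to 1-\frac1e$ and the maximum-matching fraction tends to $1$, so $g(c)\to 1-\frac1e$. This identifies $1-\frac1e$ as the conjectured infimum, attained only in the limit, and suggests proving the inequality by showing that $\Phi(c):=\big(c-ce^{(e^{-c}-1)}\big)-(1-\tfrac1e)\big(2c-\gamma^*-\gamma_*-\gamma^*\gamma_*\big)$ is nonnegative with $\Phi(c)\to0$ as $c\to\infty$; it then suffices to establish $\Phi'(c)\le 0$ on $(0,\infty)$, i.e.\ that $\Phi$ decreases monotonically to its limiting value $0$.

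The hard part is precisely this sign chase: after substituting $\gamma_*',\gamma^{*\prime}$ into $\Phi'$, one is left with a transcendental expression in $\gamma_*,\gamma^*$ (with $c=\gamma^*e^{\gamma_*}$) whose nonpositivity is not evident. I expect the cleanest route is to reparametrize by $\gamma_*$ alone, using the constraint $\ln\gamma^*-\gamma^*=\ln\gamma_*-\gamma_*$ to express $\gamma^*$, and hence $c$, as a function of $\gamma_*$, thereby collapsing $\Phi'$ into a genuine one-variable inequality that can be bounded using elementary properties of $t\mapsto te^{-t}$ and simple estimates on $e^{(e^{-c}-1)}$. Verifying that reduced inequality---and confirming that $\gamma_*\gamma^*<1$ throughout, so the implicit derivatives retain their sign---is where essentially all of the work lies.
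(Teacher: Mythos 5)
Your reduction to the window $p(n)=c/n$ via Theorem~\ref{roblivious} is the right first move, but the proposal stalls exactly where the proof has to happen: the inequality $g(c)\ge 1-\tfrac1e$ is never established. The machinery you set up --- the fixed-point relations $\gamma^*=ce^{-\gamma_*}$, $\gamma_*=ce^{-\gamma^*}$, implicit differentiation, the monotone-decrease claim for $\Phi$, the reparametrization by $\gamma_*$ --- is a strategy, not an argument, and you say yourself that the nonpositivity of $\Phi'$ ``is not evident'' and that verifying the reduced inequality ``is where essentially all of the work lies.'' Nothing in the proposal shows that this sign chase closes; you would additionally need $\gamma_*\gamma^*<1$ along the whole curve and control of $c=\gamma^*e^{\gamma_*}$ as a function of $\gamma_*$, none of which is carried out. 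So there is a genuine gap at the central step.

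The gap is avoidable because you are fighting the wrong denominator. The paper's proof never touches the $\gamma$-expression: it replaces the Bollob\'as--Brightwell bound by the cruder observation that the maximum matching cannot exceed the number of non-isolated balls, whose expected fraction is $1-(1-p)^n\to 1-e^{-c}$. Hence
\[
\mc{R}_{\mr{O}}(c/n)\ \ge\ \frac{1-e^{(e^{-c}-1)}}{1-e^{-c}}\ =\ \left.\frac{1-e^{-u}}{u}\right|_{u=1-e^{-c}}\ \ge\ 1-\frac{1}{e},
\]
since $u\mapsto(1-e^{-u})/u$ is decreasing on $(0,1]$ and equals $1-1/e$ at $u=1$, while $u=1-e^{-c}\in(0,1)$ for all $c>0$. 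This substitution of a weaker but cleaner upper bound on $\E[\mu^*]$ is the idea your proposal is missing. Note also that if you insist on working with the $\gamma$ denominator, you would in effect have to show $2-(\gamma^*+\gamma_*+\gamma^*\gamma_*)/c\le 1-e^{-c}$, which routes you back to the simple bound anyway.
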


\begin{thm}
\label{greedythm}
Let $\mu_{\mathrm{G}}(n,n,c/n)$ denote the matching size obtained by the \textsc{greedy} algorithm on the graph $G(n,n,p)$, where $p = c/n$ and $c>0$ is a constant. Then a.a.s.,
\bea
\frac{\mu_{\mathrm{G}}{(n,n,c/n)}}{n} = 1-\frac{\log (2- e^{-c})}{c} + o(1).
\eea
\end{thm}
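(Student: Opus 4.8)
The plan is to track the number of occupied bins throughout the arrival process and to show that, after rescaling, it is governed by a first-order ordinary differential equation whose solution, evaluated at the terminal time, yields the claimed constant. Concentration of the stochastic process around the deterministic solution will then follow from Wormald's differential equation method \cite{wormald99}.

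First I would let $B(t)$ denote the number of matched (occupied) bins immediately after the $t$-th ball has been processed, with $B(0)=0$, so that the final matching size is $\mu_{\mathrm{G}}(n,n,c/n) = B(n)$. The key observation is that, conditioned on $B(t)=b$, the distribution of the next increment depends only on $b$: the arriving ball is matched precisely when it has at least one edge to one of the $n-b$ currently unoccupied bins, and since each such edge is present independently with probability $p=c/n$,
\[
\E[B(t+1) - B(t) \mid B(t) = b] = 1 - (1-p)^{\,n-b}.
\]
By the symmetry of the $G(n,n,p)$ model the identities of the occupied bins are irrelevant, so $B(t)$ is in fact a Markov chain with $\{0,1\}$-valued increments.

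Next I would pass to the rescaled variables $\tau = t/n$ and $\beta = b/n$. Using $(1-c/n)^{\,n-b} = e^{-c(1-\beta)} + o(1)$, the expected one-step change becomes $1 - e^{-c(1-\beta)} + o(1)$, which identifies the trend function $f(\tau,\beta) = 1 - e^{-c(1-\beta)}$ and leads to the differential equation
\[
\frac{d\beta}{d\tau} = 1 - e^{-c(1-\beta)}, \qquad \beta(0) = 0.
\]
Separating variables and integrating (the substitution $v = e^{-c(1-\beta)}$ reduces the integrand to a partial-fraction form) gives an explicit closed-form solution; evaluating it at the terminal time $\tau = 1$ yields $\beta(1) = 1 - \frac{1}{c}\log(2 - e^{-c})$, exactly the stated constant.

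Finally I would invoke Wormald's theorem to transfer this to the random process. The three hypotheses are straightforward here: the increments are bounded by $1$ (boundedness), the conditional expectation of the increment agrees with $f$ up to an $o(1)$ error (trend), and $f$ is Lipschitz in $\beta$ on the relevant domain (smoothness). Because $2 - e^{-c} > 1$, the solution satisfies $\beta(\tau) < 1$ throughout $\tau \in [0,1]$, so the process never approaches the boundary $\beta = 1$; the domain thus remains valid and there is no premature termination. Wormald's theorem then gives $B(t) = n\,\beta(t/n) + o(n)$ a.a.s.\ uniformly in $t$, and taking $t=n$ completes the argument. The main obstacle is not in the analysis itself but in confirming that the trend error is genuinely $o(1)$ uniformly over the admissible range of $\beta$ and that the Lipschitz constant is controlled there; both are immediate since the entire trajectory lives in a compact region bounded away from $\beta = 1$.
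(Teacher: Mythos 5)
Your proposal is correct and follows essentially the same route as the paper: track the occupied-bin count, compute the expected increment $1-(1-p)^{n-b}$, pass to the scaled ODE $\frac{d\beta}{d\tau}=1-e^{-c(1-\beta)}$ with $\beta(0)=0$, and transfer back via Wormald's theorem; the explicit solution $\beta(\tau)=1-\frac{1}{c}\log\bigl(1+e^{-c\tau}(e^c-1)\bigr)$ evaluated at $\tau=1$ gives the stated constant. The paper handles the uniform $o(1)$ trend error via its Lemma~\ref{ebound} and notes the Lipschitz constant $L=ce^{c\epsilon}$ on a slightly enlarged domain, which is the same bookkeeping you describe.
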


\begin{thm} The performance ratio $\mc{R}_{\mathrm{G}}(p(n))$  of \textsc{greedy} on $G(n,n,p)$, for all valid functions $p = p(n)$, satisfies

\be
\begin{array}{rll}
\mc{R}_{\mr{G}}(p(n)) &= 1, & p(n) = o(1/n), \\
\mc{R}_{\mr{G}}(p(n)) &\ge \ds \frac{c-\log(2-e^{-c})}{2c-(\gamma^*+\gamma_*+\gamma^* \gamma_*)}, & p(n) = c/n, \\
\mc{R}_{\mr{G}}(p(n)) &= 1, & p(n) = \omega(1/n).
\end{array}
\ee
\label{rgreedy}
\end{thm}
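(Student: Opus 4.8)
The plan is to dispose of the three regimes of $p(n)$ one at a time, in each case dividing an estimate of the numerator $\E[\mu_{\mr{G}}]$ by an estimate of the denominator $\E[\mu_*]$. The numerator is handed to us: Theorem \ref{greedythm} gives $\mu_{\mr{G}}(n,n,c/n)/n = 1-\log(2-e^{-c})/c + o(1)$ a.a.s., and since $\mu_{\mr{G}}/n\in[0,1]$ is bounded, convergence in probability upgrades to convergence in expectation, so $\E[\mu_{\mr{G}}]/n \to (c-\log(2-e^{-c}))/c$. For the denominator in the regime $p=c/n$ I would reuse the upper bound on the expected maximum matching size that already underlies Theorem \ref{roblivious}; indeed, combining the statements of Theorems \ref{obliviousthm} and \ref{roblivious} forces
\be
\E[\mu_*(n,n,c/n)] \le \frac{2c-(\gamma^*+\gamma_*+\gamma^* \gamma_*)}{c}\, n + o(n).
\ee
Dividing the \textsc{greedy} value by this maximum-matching upper bound then gives $\E[\mu_{\mr{G}}]/\E[\mu_*] \ge (c-\log(2-e^{-c}))/(2c-(\gamma^*+\gamma_*+\gamma^* \gamma_*)) \cdot (1+o(1))$, and letting $n\to\infty$ produces the claimed bound. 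That we hold only an \emph{upper} bound on $\E[\mu_*]$ (it occupies the denominator) is precisely why the conclusion is an inequality rather than an equality.

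In the two extreme regimes I would instead prove the sharper statement $\mc{R}_{\mr{G}}=1$. Since \textsc{greedy} always returns a valid matching, $\mu_{\mr{G}}\le\mu_*$ and hence $\mc{R}_{\mr{G}}\le 1$, so only a matching lower bound is needed. For $p(n)=o(1/n)$ the expected number of edges is $n^2p=o(n)$, while the expected number of vertices of degree at least two---the only sites at which \textsc{greedy} can lose an edge relative to the optimum---is $O(n(np)^2)$, which is $o(n^2p)$ because $np\to 0$. Hence $\E[\mu_*-\mu_{\mr{G}}]=o(\E[\mu_{\mr{G}}])$ and the ratio tends to $1$; this is the same bookkeeping already used for \textsc{oblivious}.

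The one genuinely new point is the dense regime $p(n)=\omega(1/n)$, where \textsc{greedy}---unlike \textsc{oblivious}, which stalls at $1-1/e$---becomes asymptotically optimal. Here I would bound the expected number of balls it fails to match, $\E[n-\mu_{\mr{G}}]$. When a ball arrives with $U_k$ bins still free, its edges are independent of the past, so it goes unmatched with probability $(1-p)^{U_k}$, whence $\E[n-\mu_{\mr{G}}]=\sum_k \E[(1-p)^{U_k}]$. Splitting each term on the event $\{U_k\ge\epsilon n\}$ bounds it by $e^{-\epsilon np}+\PP(U_k<\epsilon n)$; and since a ball cannot be matched before it arrives, $U_k<\epsilon n$ forces $k>(1-\epsilon)n$, so at most $\epsilon n$ indices contribute to the second term. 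Thus $\E[n-\mu_{\mr{G}}]\le n\,e^{-\epsilon np}+\epsilon n$. Because $np\to\infty$ in this regime the first term is $o(n)$ for each fixed $\epsilon$; letting $\epsilon\to 0$ gives $\E[\mu_{\mr{G}}]=n-o(n)$, and the sandwich $\mu_{\mr{G}}\le\mu_*\le n$ forces $\mc{R}_{\mr{G}}=1$.

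The main obstacle, in a self-contained treatment, is the maximum-matching upper bound feeding the $c/n$ case: converting the fixed-point pair $\gamma_*=ce^{-\gamma^*}$, $\gamma^*=ce^{-\gamma_*}$ into a rigorous Hall/K\"onig deficiency argument (exhibiting a bin set $S$ with $|S|-|N(S)|$ large a.a.s.) is the only step requiring real combinatorial work, and everything else is elementary once it is granted from Theorem \ref{roblivious}. The remaining delicacy is purely one of uniformity: each estimate must hold for \emph{all} valid monotonic $p(n)$ in its regime, and one must track carefully the passage between the a.a.s.\ statements of Theorems \ref{greedythm} and \ref{roblivious} and the expectations defining $\mc{R}_{\mr{G}}$.
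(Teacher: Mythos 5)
Your proposal is correct and follows essentially the same route as the paper: the $c/n$ regime is handled exactly as in the proof of Theorem \ref{roblivious} (convergence in probability from Theorem \ref{greedythm}, upgraded to convergence in mean by boundedness and uniform integrability, then divided by the upper bound on the expected maximum matching size). Two small remarks. First, the maximum-matching bound that you describe as ``the only step requiring real combinatorial work'' and propose to reverse-engineer from Theorems \ref{obliviousthm} and \ref{roblivious} is simply cited in the paper as Theorem \ref{matchopt} (Bollob\'as--Brightwell, via K\"onig's theorem), so no new deficiency argument is needed. Second, in the two extreme regimes the paper's arguments are a bit more economical than yours: for $p(n)=o(1/n)$ it lower-bounds $\E[\mu_{\mr{G}}]$ by the expected number of isolated edges and upper-bounds $\E[\mu_*]$ by the expected number of edges, giving the ratio $(1-p)^{2n-2}\to 1$ directly and avoiding your loosely stated claim that degree-$\ge 2$ vertices are the only sites of loss (which would itself require a short maximal-matching/augmenting-path lemma to make precise); for $p(n)=\omega(1/n)$ it uses the deterministic fact that at most $t-1$ bins are occupied when ball $t$ arrives, so $\E[\mu_{\mr{G}}]\ge n-\sum_t(1-p)^{n-t+1}\ge n-1/p$ by an exact geometric sum, with no $\epsilon$-splitting. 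Both of your variants can be made rigorous, so there is no gap, only minor detours.
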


\begin{coro}
The performance ratio of \textsc{greedy} on $G(n,n,p)$ for all valid functions $p=p(n)$ satisfies
\be
\mc{R}_{\mathrm{G}}(p(n)) \ge 0.837.
\ee
\label{rboundgreedy}
\end{coro}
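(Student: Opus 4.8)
The plan is to establish the bound $\mc{R}_{\mr{G}}(p(n)) \ge 0.837$ by combining the three cases of Theorem \ref{rgreedy}. Since the bound is trivially satisfied with equality $1$ in the regimes $p(n) = o(1/n)$ and $p(n) = \omega(1/n)$, the entire difficulty collapses onto the critical window $p(n) = c/n$. There it suffices to show that the function
\be
f(c) := \frac{c - \log(2 - e^{-c})}{2c - (\gamma^* + \gamma_* + \gamma^* \gamma_*)}
\ee
satisfies $\inf_{c > 0} f(c) \ge 0.837$, where $\gamma_* = \gamma_*(c)$ is the smallest root of $x = c\exp(-ce^{-x})$ and $\gamma^* = c e^{-\gamma_*}$.

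First I would analyze the asymptotic behavior of $f(c)$ at the two boundaries of the interval. As $c \to 0^+$, I would Taylor-expand the numerator ($\log(2-e^{-c}) \approx c - c^2/2$) and the denominator, tracking the behavior of $\gamma_*$ and $\gamma^*$, which both tend to $0$; I expect $f(c) \to 1$ in this limit, consistent with the $o(1/n)$ regime. As $c \to \infty$, the term $e^{-c} \to 0$ so the numerator behaves like $c - \log 2$, while $\gamma_* \to$ a finite limit (the root of $x = ce^{-x}$ type behavior must be handled carefully, since $\gamma_* $ is the \emph{smallest} root and $\gamma^* = ce^{-\gamma_*}$); I expect the denominator to behave like $2c$ minus lower-order terms, so that $f(c) \to 1$ again, matching the $\omega(1/n)$ regime. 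This shows the infimum is attained at some interior critical point $c^\star$.

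Next I would locate the minimizer. Because $\gamma_*$ is defined only implicitly, I cannot write $f$ in closed form, so I would instead differentiate implicitly: from $\gamma_* = c\exp(-\gamma^*)$ and $\gamma^* = c e^{-\gamma_*}$ I obtain a coupled linear system for $d\gamma_*/dc$ and $d\gamma^*/dc$, which I would solve to express $f'(c)$ entirely in terms of $c, \gamma_*, \gamma^*$. Setting $f'(c) = 0$ and combining with the two defining relations gives a system that pins down $c^\star$. Since this cannot be solved in elementary closed form, I would resort to a rigorous numerical argument: evaluate $f$ on a fine grid together with an explicit Lipschitz/derivative bound on $f$ to certify that the true infimum exceeds $0.837$, or equivalently exhibit the numerically computed minimizing $c^\star$ and verify $f(c^\star) \ge 0.837$ by interval arithmetic.

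The main obstacle is the implicit definition of $\gamma_*$: all the analysis must proceed through the two transcendental relations rather than an explicit formula, so both the boundary limits and the interior critical-point equation require careful implicit differentiation and a verification that $\gamma_*$ is indeed the \emph{smallest} root throughout (so that the branch of the solution is the correct one and $f$ is continuous). A secondary subtlety is confirming that $f$ has no spurious singularities, i.e. that the denominator $2c - (\gamma^* + \gamma_* + \gamma^*\gamma_*)$ stays strictly positive for all $c > 0$; I would verify this using the relations $\gamma_*, \gamma^* \le c$ together with the observation that $\gamma^* \gamma_* \le c\gamma_*$, which keeps the subtracted quantity below $2c$. Once positivity and the two limiting values are secured, the continuity of $f$ reduces the claim to the single numerical verification at the interior minimizer.
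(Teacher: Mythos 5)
Your overall strategy matches the paper's: both reduce the claim to the window $p(n)=c/n$ via Theorem \ref{rgreedy}, and both ultimately rest on a numerical minimization of $f(c)$ yielding the minimum $0.8370875$ at $c^*\approx 3.1685$. Where you differ is in how the extreme values of $c$ are dispatched. You propose computing $\lim_{c\to 0^+}f(c)=\lim_{c\to\infty}f(c)=1$ by asymptotic analysis of the implicitly defined $\gamma_*,\gamma^*$; the paper instead reuses the Corollary \ref{rboundoblivious} device, bounding the maximum matching by the number of non-isolated bins to get the explicit ratio $\bigl(1-\tfrac{\log(2-e^{-c})}{c}\bigr)/(1-e^{-c})$, which exceeds $0.837$ for $c<1/2$ and $c>5$ and involves no implicit functions at all. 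The paper's route is cleaner for a practical reason you should note: bare limits at $0$ and $\infty$ do not by themselves confine the minimizer to a compact interval on which a grid-plus-Lipschitz certificate can run; you would need quantitative tail estimates (explicit $c_0,c_1$ with $f(c)>0.837$ outside $[c_0,c_1]$), which is exactly what the cruder bound delivers for free. Two further soft spots: first, your positivity argument for the denominator is broken as stated, since $\gamma_*\le c$, $\gamma^*\le c$, $\gamma^*\gamma_*\le c\gamma_*$ only gives $\gamma^*+\gamma_*+\gamma^*\gamma_*\le 2c+c\gamma_*$, which is not below $2c$; a correct argument uses the symmetric identities $\gamma_*=ce^{-\gamma^*}$ and $\gamma^*=ce^{-\gamma_*}$ together with $1-e^{-x}\ge xe^{-x}$ to get $c-\gamma_*\ge \gamma^*\gamma_*$ and $c-\gamma^*\ge\gamma^*\gamma_*$, hence $2c-(\gamma^*+\gamma_*+\gamma^*\gamma_*)\ge\gamma^*\gamma_*>0$. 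Second, as $c\to\infty$ the smallest root satisfies $\gamma_*\to 0$ (it is exponentially small, roughly $ce^{-c}$) while $\gamma^*\to c$, so your hedge about ``a finite limit'' resolves, but the branch-tracking you flag is genuinely needed there. With those repairs your plan goes through and is no less rigorous than the paper's.
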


\subsection{Related work.}
The online matching problem was originally analyzed by Karp, Vazirani and Vazirani, who introduced the \textsc{ranking} algorithm and showed that it obtains a competitive ratio of $1-1/e$ \cite{karp90}. Simpler proofs of the \textsc{ranking} algorithm have since been found \cite{djk13,birnbaum2008}. A $1-1/e$ competitive algorithm is also known for vertex-weighted online bipartite matching, which was given by Aggarwal, Goel, Karande, and Mehta \cite{aggarwal11}.

The use of deterministic differential equations to model random processes was first studied by Kurtz, who gave a general purpose theorem for continuous-time jump Markov processes \cite{kurtz70}. A discrete-time theorem tailored for random graphs was given by Wormald, which we use in this paper \cite{wormald95,wormald99}. The differential equation method has been used to study a variety of graph properties including $k$-cores, independent sets, and greedy packing on hypergraphs \cite{pittel96, wormald95, wormald99}. It was also used to analyze a load balancing scenario similar to ours by Mitzenmacher \cite{mitzenmacher99}.

Early studies of matchings on random graphs focused on determining the existence of perfect matchings; Erd{\H{o}}s and R{\'e}nyi showed that the threshold for the existence of a perfect matching occurs when the graph is likely to have a minimum degree of one \cite{erdos64, erdos66}. One of the first studies of greedy matchings on random graphs, as well as the first to employ the differential equation method (specifically via Kurtz's theorem), was the work of Karp and Sipser \cite{ks81,kurtz70}. They considered ordinary sparse graphs, specifically the $G(n,p)$ model\footnote{The $G(n,p)$ random graph has $n$ vertices and an edge occurring between each pair of vertices independently with probability $p$.} with $p=c/n$. The Karp-Sipser algorithm first matches all vertices with degree one until there are no such vertices remaining, and then obtains matches by selecting random edges. This results in a matching size that is within $1-o(1)$ of the maximum matching. The algorithm was studied more in depth by Aronson, Frieze, and Pittel, who gave sharper error bounds on performance \cite{aronson98}. 


Simpler greedy matching algorithms for ordinary graphs were studied by Dyer, Frieze, and Pittel \cite{dyer93}. Again for the $G(n,p)$ model with $p=c/n$, they looked at the greedy algorithm which picks random edges, as well as what they refer to as \textsc{modified greedy}, which first picks a random vertex and then selects a random connected edge. They showed that as $n \rightarrow \infty$, the matching sizes obtained by the two algorithms follow a normal distribution with mean and variance determined by $c$. It is interesting to note that the expected fraction of matched vertices for their \textsc{modified greedy} algorithm is exactly the result of our \textsc{greedy} algorithm with a multiplicative factor of $1/2$. Indeed, \textsc{modified greedy} is a natural analog of $\textsc{greedy}$ for ordinary graphs, although their result was obtained using a combinatorial approach completely different from our method.

\section{Analysis}
\label{analysis}
Let the graph $G(n,n,p)$ be generated by the following discrete random graph process. Initially at $t=0$, we are given the set of bin vertices $I$, where $|I|=n$. At each time step $t>0$, a ball vertex $j \in J$ arrives and all of its neighboring edges are revealed. Each neighboring edge occurs with a probability $p(n)$ independently of all other edges. After $n$ steps an instance of the graph $G(n,n,p)$ is obtained.

We introduce Wormald's theorem using notation consistent with \cite{wormald99}. Consider a discrete-time process with the probability space $\Omega$, which has elements $(q_0,q_1,\ldots,q_T)$, where each $q_k$, $0 \le k \le T$, takes values in a set $S$. A sequence $h_t =(q_0,q_1,\ldots,q_t)$ up to time $t$ is referred to as the history of the process. Define a sequence of random processes over $n=1,2,\ldots$, so that we have $S=S^{(n)}$ and $q_t=q_t^{(n)}$. Let $S^{(n)+}$ denote the set of all histories $h_t = (q_0,\ldots q_t)$, $0 \le t \le T$, where $q_k \in S^{(n)}$, $0 \le k \le t$.

The following is the general theorem of Wormald for justifying the use of differential equations on random graph processes \cite{wormald99}. The theorem is stated for a multidimensional process (i.e. $a > 1$), but we will need to go beyond one dimension only for the analysis of vertex-weighted matching. Note that in part (iii), a Lipschitz condition is satisfied for a function $f(\mathbf{u})$ on the domain $D$ if there exists a constant $L>0$ satisfying
\bea
|f(\mathbf{u})-f(\mathbf{v})| \le L || \mathbf{u} - \mathbf{v}||_\infty
\eea
for all $\mathbf{u}, \mathbf{v} \in D$.

\begin{thm}[Wormald \cite{wormald99}] \label{wormaldthm} For $1 \le \ell \le a$, where $a$ is fixed, let $y^{(\ell)}:S^{(n)+} \rightarrow \R$ and $f_\ell: \R^{(a+1)} \rightarrow \R,$ such that for some constant $C_0$ and all $\ell$, $|y^{(\ell)}(h_t)| < C_0 n$ for all $h_t \in S^{(n)+}$ for all $n$. Let $Y_\ell(t)$ denote the random variable for $y_\ell(h_t)$. Assume the following three conditions hold, where $D$ is some bounded connected open set containing the closure of 
\be
\{ (0,z_1,\ldots,z_a ): \PP(Y_\ell(0)=z_\ell n, 1\le \ell \le a) \ne 0 \mathrm{\it~for~some~} n \},
\ee
and $T_D$ is a stopping time for the minimum $t$ such that $(t/n,Y_1(t)/n,\ldots,Y_a(t)/n) \notin D$.
\begin{enumerate}[(i)]
\item (Boundedness hypothesis.) For some function $\beta = \beta(n) \ge 1$
\bea
\max_{1\le \ell \le a} |Y_\ell(t+1)-Y_\ell(t)| \le \beta,
\eea
for $t < T_D$.
\item (Trend hypothesis.) For some function $\lambda_1 = \lambda_1(n) = o(1)$, for all $1\le \ell \le a$
\bea
|\E[Y_\ell(t+1)-Y_\ell(t)|H_t] - f_\ell(t/n,Y_1(t)/n,\ldots,Y_a(t)/n)| \le \lambda_1
\eea
for $t < T_D$.
\item (Lipschitz hypothesis.) Each function $f_\ell$ is continuous, and satisfies a Lipschitz condition, on
\bea
D \cap \{(t,z_1,\ldots,z_a): t \ge 0\},
\eea
with the same Lipschitz constant for each $\ell$.
\end{enumerate}
Then the following are true.
\begin{enumerate}[(a)]
\item For $(0, \hat{z}_1,\ldots,\hat{z}_a) \in D$ the system of differential equations
\bea
\frac{\partial z_\ell}{\partial \tau} = f_\ell(\tau,z_1,\ldots,z_a),\quad 1 \le \ell \le a
\eea
has a unique solution in $D$ for $z_\ell:\R \rightarrow \R$ passing through
\bea
z_\ell(0) = \hat{z}_\ell, \quad 1 \le \ell \le a
\eea
and which extends to points arbitrarily close to the boundary of $D$;
\item Let $\lambda > \lambda_1$ with $\lambda = o(1)$. For a sufficiently large constant $C$, with probability $1-O(\frac{\beta}{\lambda}\exp{(-\frac{n\lambda^3}{\beta^3})}),$
\bea
Y_t^{(\ell)} = n z_\ell(t/n) + O(\lambda n)
\eea
uniformly for $0 \le t \le \sigma n$ and for each $\ell$, where $z_\ell(\tau)$ is the solution in (a) with $\hat{z}_\ell = \frac{1}{n}Y_\ell(0)$, and $\sigma = \sigma(n)$ is the supremum of those $\tau$ to which the solution can be extended before reaching within $\ell^\infty$-distance $C \lambda$ of the boundary of $D$.
\end{enumerate}
\end{thm}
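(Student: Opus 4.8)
The plan is to handle the two conclusions by different means: part (a) is a classical existence--uniqueness statement for ODEs, while part (b) is a concentration statement that I would prove by comparing the rescaled process $Y_\ell(t)/n$ to the deterministic solution $z_\ell(t/n)$ using a martingale (Azuma--Hoeffding) argument together with a discrete Gronwall estimate.

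For part (a), since each $f_\ell$ is continuous and Lipschitz on $D \cap \{t \ge 0\}$ with a common Lipschitz constant $L$, the Picard--Lindel\"of (Cauchy--Lipschitz) theorem gives a unique local solution through any interior point $(0,\hat z_1,\dots,\hat z_a) \in D$. For the extension claim I would invoke the standard continuation principle: a maximal solution confined to a compact subset of the bounded set $D$ and bounded away from $\partial D$ could be prolonged, contradicting maximality, so the trajectory can be continued until it comes arbitrarily close to $\partial D$.

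For part (b), the key observation is that the trend hypothesis identifies $f_\ell$ as the conditional one-step drift of $Y_\ell$, so $Y_\ell(t)/n$ should track the Euler discretization of the ODE. Writing $g_\ell(t) = Y_\ell(t) - n z_\ell(t/n)$ and $D(t) = \max_\ell |g_\ell(t)|$, a single step satisfies $\E[g_\ell(t+1)-g_\ell(t)\mid H_t] = f_\ell(t/n, Y(t)/n) - f_\ell(t/n, z(t/n)) + O(\lambda_1) + O(1/n)$, where $Y(t)/n$ and $z(t/n)$ denote the rescaled process and solution vectors, the $O(1/n)$ is the local Euler error of $z_\ell$, and the leading difference is at most $(L/n)D(t)$ by the Lipschitz hypothesis. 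Thus $g_\ell(t)$ equals a predictable part bounded in magnitude by $\sum_{s<t}\bigl((L/n)D(s)+\lambda_1+O(1/n)\bigr)$ plus a martingale whose increments are $O(\beta)$ by the boundedness hypothesis.

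I would then partition the $\sigma n$ steps into $M$ blocks of length $w$. Over one block the predictable part is controlled as above, and the martingale part is controlled by Azuma--Hoeffding, which bounds its fluctuation by $d$ with probability $1 - 2\exp(-\Omega(d^2/(w\beta^2)))$. Setting up the block-to-block recursion $D_{k+1} \le (1+O(Lw/n))\,D_k + (\text{per-block input})$ and solving this discrete Gronwall inequality over the $M$ blocks produces a geometric amplification $e^{O(L\sigma)} = O(1)$ times the accumulated per-block error. Choosing $w$ of order $\lambda n/\beta$ balances the two error sources: the total Lipschitz/Euler discretization error over the $M = O(\beta/\lambda)$ blocks is $O(\lambda n)$, while the per-block martingale tolerance $d = O(\lambda w/\sigma)$ makes $d^2/(w\beta^2) = \Omega(n\lambda^3/\beta^3)$. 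A union bound over the $M = O(\beta/\lambda)$ blocks and the $a$ coordinates then yields exactly the stated failure probability $O(\frac{\beta}{\lambda}\exp(-\frac{n\lambda^3}{\beta^3}))$ together with the uniform bound $Y_\ell(t) = n z_\ell(t/n) + O(\lambda n)$. All estimates are restricted to $t < T_D$, i.e. while the rescaled state remains in $D$, which is precisely why the conclusion holds only up to $\sigma n$.

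The main obstacle is the precise choice of the block length $w \asymp \lambda n/\beta$ and the bookkeeping that makes the three per-block error sources --- the $\lambda_1$ slack from the trend hypothesis, the $O(Lw^2\beta/n)$ Euler/Lipschitz discretization error, and the $O(\beta\sqrt{w})$ martingale fluctuation --- balance so as to reproduce the exact exponent $n\lambda^3/\beta^3$ and prefactor $\beta/\lambda$, while keeping the Gronwall amplification $e^{O(L\sigma)}$ bounded and the bound uniform in $t$ up to the boundary stopping time $T_D$. The multidimensional case ($a>1$) adds only the need to propagate the coordinatewise maximum $D(t)$ through the recursion, which the common Lipschitz constant $L$ makes routine.
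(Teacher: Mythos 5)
The paper does not prove this theorem at all---it is imported verbatim from Wormald \cite{wormald99} as a black box---so the only meaningful comparison is with Wormald's original argument, and your sketch is essentially that proof: Picard--Lindel\"of plus the continuation principle for part (a), and for part (b) the drift-plus-martingale decomposition, partition into blocks of length $w \asymp \lambda n/\beta$, Azuma--Hoeffding within blocks, and a discrete Gronwall recursion, with the balancing $d \sim \lambda w$ correctly reproducing the exponent $n\lambda^3/\beta^3$ and the prefactor $\beta/\lambda$ from the $O(\beta/\lambda)$ blocks. Your outline is sound and matches the standard proof; no gaps beyond the bookkeeping you yourself flag.
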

The stopping time $T_D$ is needed for situations where at some point the Lipschitz condition fails, as often happens near the end of some graph processes. This problem does not arise in our application.

We will use the following identity to satisfy the trend hypothesis of Wormald's theorem.
\begin{lma}
For $n > 0$, $c \le n/2$, and $0 \le w \le 1$,
\bea
0 \le e^{-cw} - \left ( 1-\frac{c}{n}\right )^{nw} \le \frac{c}{ne}.
\eea \label{ebound}
\end{lma}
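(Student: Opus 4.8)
The plan is to treat the two inequalities separately, and to reduce everything to a pair of elementary exponential estimates plus a one-variable optimization. For the left inequality, I would note that $c \le n/2$ guarantees $1 - c/n \ge 0$, and that the standard bound $1 - x \le e^{-x}$ gives $1 - c/n \le e^{-c/n}$. Raising both sides to the nonnegative power $nw$ preserves the inequality, yielding $(1-c/n)^{nw} \le e^{-cw}$, which is precisely $e^{-cw} - (1-c/n)^{nw} \ge 0$. This disposes of the lower bound with no further work.

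For the upper bound I would first rewrite the subtracted term as a clean exponential. Setting $\rho := -n\ln(1-c/n) > 0$, we have $(1-c/n)^{nw} = e^{-\rho w}$, so the quantity to bound becomes $g(w) := e^{-cw} - e^{-\rho w} = e^{-cw}\bigl(1 - e^{-(\rho-c)w}\bigr)$. Since $1 - c/n \le e^{-c/n}$ implies $\rho \ge c$, the quantity $\delta := \rho - c$ is nonnegative, and applying $1 - e^{-x} \le x$ (valid for $x \ge 0$) gives $g(w) \le \delta\, w\, e^{-cw}$. The function $w \mapsto w\,e^{-cw}$ attains its maximum over $w \ge 0$ at $w = 1/c$ with value $1/(ce)$, so uniformly in $w$ we obtain $g(w) \le \delta/(ce)$.

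It then remains to show $\delta \le c^2/n$, and this is exactly where the hypothesis $c \le n/2$ is used. Expanding the logarithm as a power series gives $\delta = -n\ln(1-c/n) - c = \sum_{k \ge 2} c^k/(k\,n^{k-1}) = (c^2/n)\sum_{k \ge 2} (1/k)(c/n)^{k-2}$. Bounding $1/k \le 1/2$ for $k \ge 2$ and summing the resulting geometric series, the bracketed sum is at most $\tfrac12\cdot(1-c/n)^{-1}$, which is at most $1$ because $c/n \le 1/2$ forces $(1-c/n)^{-1} \le 2$. Hence $\delta \le c^2/n$, and combining with the previous step yields $g(w) \le (c^2/n)/(ce) = c/(ne)$, as claimed.

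The only genuinely delicate step is the final estimate $\delta \le c^2/n$: the tail of the logarithmic series must be controlled uniformly, and the role of $c \le n/2$ is precisely to pin the common ratio $c/n$ at or below $1/2$ so that the geometric sum stays bounded by a constant. Every other ingredient is a textbook inequality ($1 - x \le e^{-x}$ and $1 - e^{-x} \le x$) together with the elementary maximization of $w\,e^{-cw}$, so I expect the bookkeeping around the series bound to be the main, though modest, obstacle.
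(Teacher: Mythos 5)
Your proof is correct and follows essentially the same route as the paper: the lower bound comes from $1-x\le e^{-x}$, and the upper bound from showing the difference is at most $(c^2/n)\,w\,e^{-cw}$ and then maximizing $w e^{-cw}$. Your power-series estimate $\delta \le c^2/n$ is simply a from-scratch derivation of the inequality $1-x\ge e^{-x-x^2}$ for $x\le 1/2$, which the paper quotes directly as a standard bound.
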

\begin{proof}
Using the inequalities $1-x \ge e^{-x-x^2}$ for $x \le 1/2$ and $1-x \le e^{-x}$ for $x\ge0$, we obtain
$
e^{-cw} \left (1-\frac{c^2w}{n} \right) \le \left (1-\frac{c}{n} \right )^{nw} \le e^{-cw}.
$
The result follows by rearranging terms and using $cwe^{-cw} \le 1/e.$

 \qed

\end{proof}

For bounding the maximum matching size, we use a result from Bollob{\'a}s and Brightwell \cite{bb95}. Their result (specifically Theorem 14 in \cite{bb95}) is stated in terms of the size of the largest independent set for a bipartite graph, which by K{\"o}nig's theorem bounds the maximum matching size \cite{konig16}. 

\begin{thm}[Bollob{\'a}s and Brightwell \cite{bb95}]
\label{matchopt}
Let $\mu^*(n,n,c/n)$ denote the size of the maximum matching on the graph $G(n,n,c/n)$ where $p = c/n$. Then a.a.s.,
\bea
\frac{\mu^*(n,n,c/n)}{n} \le 2-\frac{\gamma^*+\gamma_*+\gamma^* \gamma_*}{c} + o(1),
\eea
where $\gamma_*$ is the smallest root of the equation $x = c\exp(-ce^{-x})$ and $\gamma^* = c e^{-\gamma_*}$.
\end{thm}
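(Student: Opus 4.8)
The plan is to deduce this upper bound on the maximum matching from a matching \emph{lower} bound on the independence number, pushing all of the random-graph content into a statement about independent sets. Write the vertex set of $G(n,n,c/n)$ as $I \cup J$ with $|I|=|J|=n$, and let $\alpha$ denote the size of a largest independent set. If $S$ is any independent set, then $(I\cup J)\setminus S$ is a vertex cover, and since the edges of a matching are pairwise disjoint each of them must use a distinct cover vertex; hence every matching has size at most $2n-|S|$, so $\mu^*(n,n,c/n) \le 2n - \alpha$. (This is the easy half of K\"onig's theorem; the reverse inequality, which we do not need, would give equality.) Dividing by $n$, it therefore suffices to produce, a.a.s., an independent set of size at least $\frac{\gamma^*+\gamma_*+\gamma^*\gamma_*}{c}\,n - o(n)$.

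To build such a set I would exploit the local structure of $G(n,n,c/n)$: as $n\to\infty$ the neighborhood of a typical vertex looks like a two-type Galton--Watson tree in which every vertex has $\mathrm{Poisson}(c)$ children of the opposite type. On such a tree the cavity recursion for a maximum independent set alternates between the two sides, governed by the map $x \mapsto c e^{-x}$ applied twice: the relevant pair satisfies $\gamma^* = c e^{-\gamma_*}$ and $\gamma_* = c e^{-\gamma^*}$, which is exactly the condition that $\gamma_*$ be the smallest root of $x = c\exp(-c e^{-x})$ with $\gamma^* = c e^{-\gamma_*}$. Concretely one would select an independent set by a two-sided exposure/peeling rule — for instance retaining (almost) all of one side and discarding those vertices forced out by the low-degree vertices kept on the other side — designed so that the limiting fractions of retained $I$-vertices and $J$-vertices, together with the density of conflicts along the $\approx cn$ edges, combine to $\frac{\gamma^*+\gamma_*+\gamma^*\gamma_*}{c}$.

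To turn this heuristic into an a.a.s. bound I would track the peeling process with Wormald's theorem (Theorem~\ref{wormaldthm}): the numbers of as-yet-undecided vertices on the two sides form a two-dimensional random process whose one-step drift yields a pair of differential equations, and whose solution runs to the fixed point $(\gamma_*,\gamma^*)$, giving concentration of $|S|/n$ at the claimed value. The main obstacle is precisely this step — showing that the construction is asymptotically optimal and that its size concentrates at the \emph{exact} constant rather than a weaker one — which is the substantive content of Bollob\'as and Brightwell's Theorem~14. A cruder but self-contained upper bound can be obtained by a first-moment computation on the expected number of matchings of size $\rho n$, namely $\binom{n}{\rho n}^2 (\rho n)!\,(c/n)^{\rho n}$, and locating the $\rho$ at which this expectation begins to vanish; however this route does not reproduce the sharp constant above, so for the present paper it is cleanest to import the independence-number result as a black box.
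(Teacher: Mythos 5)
Your proposal matches the paper's treatment: the paper likewise does not prove this bound from scratch but imports Bollob\'as and Brightwell's Theorem~14 on the independence number of $G(n,n,c/n)$ as a black box and converts it to a matching upper bound via K\"onig's theorem, exactly as you do with the vertex-cover complement argument $\mu^* \le 2n - \alpha$. Your additional heuristic sketch of how the independence-number constant arises is not needed (and is not attempted in the paper), and your final decision to cite the result directly is the same choice the authors make.
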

\nin The bound is known to be tight for $c \le e$. Remarkably, this bound differs from the expression for the maximum matching size in the $G(n,p)$ model obtained in \cite{aronson98} by a factor of two.

\subsection{Oblivious algorithm.}
The oblivious algorithm performs a ``one shot'' trial for each ball $j$, where it attempts to match $j$ to a random neighbor. The algorithm is unaware of which bins are already matched, so an attempted match to an already occupied bin means that ball $j$ is dropped. This is shown in Algorithm \ref{algo:oblivious}. We use $N(j)$ to denote the set of neighboring bins of ball $j$.
\begin{algorithm}
\DontPrintSemicolon
\ForEach{$j \in J$}{
When ball $j$ arrives, let $N(j)$ denote the set of neighboring bins\;
\If{$|N(j)| >0$}{
Select a random neighboring bin $i \in N(j)$\;
\If{bin $i$ is unmatched}{
Match $j$ to $i$\;
}
\Else{
$j$ is dropped\;
}
}
}
\caption{{\sc oblivious}}
\label{algo:oblivious}
\end{algorithm}

Intuitively, the algorithm is expected to do well on extremely sparse graphs, specifically those where balls are likely to have at most one neighbor. On the other hand, the performance on dense graphs should be suboptimal since each ball has a variety of neighboring bins that are not utilized by the algorithm. The following two lemmas confirm this behavior.

\begin{lma}
For all valid functions $p(n)$ where $p(n) = o(1/n)$, the performance of \textsc{oblivous} on $G(n,n,p)$ with $p=p(n)$ satisfies $\mathcal{R}_{\mathrm{G}}(p(n)) = 1$.
\label{obliviouso}
\end{lma}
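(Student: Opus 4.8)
The plan is to sandwich both the oblivious matching size and the maximum matching size between the total number of edges and the number of \emph{isolated} edges, and to show that in the regime $p(n) = o(1/n)$ these two quantities are asymptotically equal. Write $p = p(n)$ and note that $np \to 0$. Since \textsc{oblivious} always outputs a valid matching, $\mu_{\mathrm{O}} \le \mu_*$ holds on every realization, so $\mathcal{R}_{\mathrm{O}}(p(n)) \le 1$ trivially and all the work lies in the matching lower bound.

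First I would bound the maximum matching from above by the total edge count. Letting $X$ be the number of edges of $G(n,n,p)$, we have $X \sim \mathrm{Bin}(n^2,p)$ and $\mu_* \le X$ pointwise, so $\E[\mu_*] \le \E[X] = n^2 p$.

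Next I would bound the oblivious matching from below by the number of isolated edges, where I call an edge $(i,j)$ \emph{isolated} when bin $i$ and ball $j$ both have degree one. If $(i,j)$ is isolated then, upon the arrival of ball $j$, its unique neighbor is $i$, so \textsc{oblivious} necessarily probes $i$; and since no other ball is adjacent to $i$, the bin is still free and the match succeeds. Distinct isolated edges use distinct balls, hence $\mu_{\mathrm{O}}$ is at least the number $Z$ of isolated edges. By independence of the edges, $\PP[(i,j)\text{ isolated}] = p(1-p)^{2(n-1)}$, so $\E[Z] = n^2 p (1-p)^{2(n-1)}$.

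Finally I would note that $(1-p)^{2(n-1)} \to 1$ whenever $np \to 0$ (for instance by Lemma \ref{ebound} with $c = np$, or directly from $\ln(1-p) = -p(1+o(1))$), giving $\E[Z] = (1-o(1))\,n^2 p$. Chaining the estimates yields
\[
(1-o(1))\, n^2 p \;=\; \E[Z] \;\le\; \E[\mu_{\mathrm{O}}] \;\le\; \E[\mu_*] \;\le\; n^2 p,
\]
so the ratio of expectations tends to $1$ and $\mathcal{R}_{\mathrm{O}}(p(n)) = 1$. I do not anticipate a genuine obstacle; the only point needing care is verifying that \textsc{oblivious} captures \emph{every} isolated edge, which guarantees that the crude lower bound $Z$ already matches the crude upper bound $X$ up to a $1-o(1)$ factor in this sparse regime.
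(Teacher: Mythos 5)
Your proposal is correct and follows essentially the same route as the paper's proof: lower-bound $\E[\mu_{\mathrm{O}}]$ by the expected number of isolated edges $n^2p(1-p)^{2n-2}$, upper-bound $\E[\mu_*]$ by the expected edge count $n^2p$, and observe that $(1-p)^{2n-2}\to 1$ when $np\to 0$. The only addition is your explicit justification that \textsc{oblivious} captures every isolated edge, which the paper asserts without elaboration.
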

\begin{proof}
We consider the number of isolated edges in $G(n,n,p)$, where an edge $(i,j)$ is isolated if $N(i) = \{j\}$ and $N(j) = \{i\}$. For any bin and ball pair $(i,j)$, the probability of an isolated edge occurring between them is
\bea
\PP( (i,j)\mathrm{~is~isolated~edge}) = p(1-p)^{(2n-2)}.
\eea
Every isolated edge is matched by \textsc{oblivious}, so
\bea
\E[\mu_\mathrm{O}] \ge n^2p(1-p)^{(2n-2)},
\eea
where $\mu_\mr{O}$ is the size of the \textsc{oblivious} matching. The maximum matching size, denoted by $\mu^*$, is upper bounded in expectation by the expected number of edges, $n^2p$, so
\bea
\frac{\E[\mu_{\mathrm{O}}]}{\E[ \mu^*]} &\ge& (1-p)^{(2n-2)} = (e^{-o(\frac{1}{n})})^{(2n-2)}=  e^{-o(1)}  \rightarrow 1.
\eea
\qed
\end{proof}

\begin{lma}
For all valid functions $p(n)$ where $p(n) = \omega(1/n)$, the performance of \textsc{oblivous} on $G(n,n,p)$ with $p=p(n)$ satisfies $\mathcal{R}_{\mathrm{G}}(p(n)) = 1-1/e$.
\label{obliviousw}
\end{lma}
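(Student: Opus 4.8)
The plan is to obtain the numerator $\E[\mu_{\mr{O}}]$ exactly from a balls-into-bins reduction, and to control the denominator $\E[\mu^*]$ by showing a near-perfect matching exists once $np\to\infty$. First I would analyze a single arrival. When ball $j$ arrives, its neighbor set $N(j)\subseteq I$ contains each bin independently with probability $p$, and if $N(j)\neq\emptyset$ the algorithm picks a uniform element of $N(j)$. By symmetry of the $n$ bins, each fixed bin $i$ is the one selected with probability exactly $\frac1n\PP(N(j)\neq\emptyset)=\frac{q}{n}$, where $q:=1-(1-p)^n$. Hence, independently across balls, each ball selects a uniformly random bin with probability $q$ and selects nothing with probability $1-q$. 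A bin becomes matched exactly when it is selected by at least one ball (it is assigned to the first such ball), so $\mu_{\mr{O}}$ equals the number of distinct selected bins, and linearity of expectation gives the exact identity
\be
\E[\mu_{\mr{O}}] = n\left(1-\left(1-\frac{q}{n}\right)^{n}\right).
\ee
This is precisely the drift underlying Theorem \ref{obliviousthm}, so one could instead invoke Wormald's theorem, but for the performance ratio only the expectation is needed.

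Next I would pass to the limit. Since $p=\omega(1/n)$ we have $np\to\infty$, and $0\le(1-p)^n\le e^{-np}\to 0$ forces $q\to 1$; then $n\log(1-q/n)=-q+O(1/n)\to-1$, so $(1-q/n)^n\to e^{-1}$ and
\be
\frac{\E[\mu_{\mr{O}}]}{n}\longrightarrow 1-\frac{1}{e}.
\ee
(For $p=c/n$, where $q\to 1-e^{-c}$, this reproduces the value $1-e^{(e^{-c}-1)}$ of Theorem \ref{obliviousthm}, a useful consistency check.)

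For the denominator, $\mu^*\le n$ is immediate, so it remains to prove $\E[\mu^*]\ge n-o(n)$. Here I would use the defect form of K\"onig's theorem, $\mu^*=n-\max_{S\subseteq I}\big(|S|-|N(S)|\big)$, and show the deficiency is $o(n)$ a.a.s. Fix $\epsilon>0$; if the deficiency is at least $\epsilon n$, then some $S\subseteq I$ with $|S|=s\ge\epsilon n$ has its neighborhood contained in a set $T\subseteq J$ with $|T|=s-\epsilon n$, so that $S$ sends no edge to $J\setminus T$, a set of size $n-s+\epsilon n\ge\epsilon n$. Using $1-p\le e^{-p}$, the probability of this for fixed $(S,T)$ is at most $(1-p)^{s(n-|T|)}\le e^{-p\epsilon^2 n^2}$, so a union bound over the at most $4^n$ pairs $(S,T)$ gives
\be
\PP\big(\text{deficiency}\ge\epsilon n\big)\le n\,4^{n}\,e^{-\epsilon^2 n\,(np)}\longrightarrow 0,
\ee
since $np\to\infty$ forces $\epsilon^2(np)>\log 4$ eventually. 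Thus $\mu^*/n\to 1$ in probability, and since $0\le\mu^*/n\le 1$, bounded convergence yields $\E[\mu^*]/n\to 1$. Combining the three displays gives $\mc{R}_{\mr{O}}(p(n))=\lim \E[\mu_{\mr{O}}]/\E[\mu^*]=1-1/e$.

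The main obstacle is this denominator bound. Theorem \ref{matchopt} supplies only an \emph{upper} bound on $\mu^*$, whereas here a matching \emph{lower} bound is needed across the whole range $p=\omega(1/n)$, which includes densities below the perfect-matching threshold $\log n/n$ where isolated vertices (expected number $n(1-p)^n=o(n)$) genuinely occur. The delicate point in the union bound above is to verify that the entropy factor $4^{n}$ is dominated by $e^{-\epsilon^2 n(np)}$ uniformly over the range of $s$, which is exactly what the hypothesis $np\to\infty$ guarantees.
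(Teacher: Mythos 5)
Your proof is correct, and it is in fact more complete than the paper's. For the numerator you and the paper use the same underlying observation --- fix a bin and compute the probability that some ball attempts to select it --- but you push it to an exact identity $\E[\mu_{\mr{O}}]=n(1-(1-q/n)^n)$ with $q=1-(1-p)^n$, whereas the paper only bounds $\PP(\text{no attempted match to } i)$ from above by a quantity tending to $e^{-1}$ and concludes $\E[\mu_{\mr{O}}]\ge n(1-1/e)-o(n)$. The real divergence is in the denominator: the paper simply invokes $\mu^*\le n$ and declares ``the result follows,'' which strictly speaking establishes only $\mc{R}_{\mr{O}}\ge 1-1/e$, not the stated equality (that lower bound is all that is needed for Corollary \ref{rboundoblivious}, which may explain the omission). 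To get equality one needs both the matching upper bound $\E[\mu_{\mr{O}}]\le n(1-1/e)+o(n)$ --- immediate from your exact formula --- and a lower bound $\E[\mu^*]\ge n-o(n)$, which Theorem \ref{matchopt} does not supply since it is an upper bound. Your deficiency argument via the defect form of K\"onig's theorem, with the union bound $4^n e^{-\epsilon^2 n (np)}\to 0$ driven by $np\to\infty$, cleanly closes exactly this gap, and the passage from convergence in probability to convergence in mean via boundedness mirrors what the paper does elsewhere (in the proof of Theorem \ref{roblivious}). In short: same core idea for the algorithm's output, plus a genuinely new and necessary ingredient for the optimum that the paper's own proof leaves implicit.
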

\begin{proof}
Fix a bin $i$. For each ball $j$, there is an attempted assignment of $j$ to $i$ with probability $1/n$ if $j$ has at least one neighbor. Thus,
\bea
\PP(\mathrm{no~attempted~match~of~}j\mathrm{~to~}i) = 1-\left( \frac{1-(1-p)^n}{n} \right).
\eea
Considering all balls and using $1-x \le e^{-x}$,
\bea
\PP(\mathrm{no~attempted~match~to~}i) &\le& \left(1-\left( \frac{1-e^{-pn}}{n} \right) \right)^n \n \\
&=& \left(1-\left( \frac{1-e^{-\omega(1)}}{n} \right) \right)^n \rightarrow \left (1-\frac{1}{n}\right)^n \rightarrow e^{-1}.
\eea
Accordingly, the expected number of bins matched by \textsc{oblivious} is at least $n(1-1/e)$. The maximum matching size is at most $n$, so the result follows.
\qed
\end{proof}

We are left to analyze the region $p(n)=c/n$, where we observe the transition from a performance ratio of $1$ to $1-1/e$. The proof of Theorem \ref{obliviousthm} is shown below; the proofs of Theorem \ref{roblivious} and Corollary \ref{rboundoblivious} then follow easily. Theorem \ref{obliviousthm} can be proved using a linearity of expectations argument; we instead use the approach below to introduce the differential equation method. ~\\

\nin \textit{Proof of Theorem \ref{obliviousthm}.}  
Let $Y(t)$ denote the number of occupied bins immediately before the $t^\mathrm{th}$ ball arrives, where $Y(1) = 0$. Slightly abusing notation, we use $Y(t)$ to denote both the random variable and instances of the random variable. Given $Y(t)$, the $t^\mathrm{th}$ ball is dropped if it is isolated or if its selected neighbor is already matched. Thus,
\bea
\PP(\mathrm{ball~} t \mathrm{~is~matched}|Y(t)) &=& (1-(1-p)^n) \left(1-\frac{Y(t)}{n} \right) \n \\
&=& \E[Y(t+1)-Y(t)|Y(t)].
\eea
Define the normalized random variable
\bea
Z(\tau) = \frac{Y(n\tau)}{n},\quad 0 \le \tau \le 1,
\eea
so that $Z(\tau)$ indicates the fraction of occupied bins after $n\tau$ of the arrivals have occurred. We have
\bea
\frac{\E[Z(\tau+1/n)-Z(\tau) | Z(\tau)]}{1/n} &=& (1-(1-p)^n)(1-Z(\tau)) \n \\
&=& \left(1-\left(1-\frac{c}{n}\right)^n\right)(1-Z(\tau)) \n \\
&=& (1-e^{-c})(1-Z(\tau)) + o(1).
\eea
As $n \rightarrow \infty$, we arrive at the differential equation
\bea
\frac{dz(\tau)}{d\tau} = (1-e^{-c})(1-z(\tau)),
\eea
where $z(\cdot)$ is deterministic and replaces $Z(\cdot)$. Integrating and using $z(0) = 0$ gives
\bea
z(\tau)=1-e^{(e^{-c}-1)\tau},\quad 0 \le \tau \le 1.
\eea

Applying Theorem \ref{wormaldthm}, we choose the domain $D$ defined by $- \epsilon < \tau < 1+\epsilon$ and  $- \epsilon < z < 1+\epsilon$, for $\epsilon>0$. Clearly we have $C_0 = 1$ and $\beta = 1$ by the nature of the matching process. Let $\lambda_1 = c/(en)$ for the trend hypothesis, which is satisfied according to Lemma \ref{ebound}. The Lipschitz hypothesis is satisfied with a Lipschitz constant $L = (1+\epsilon)(1-e^{-c})$. Setting $\tau=1$ and choosing $\lambda = cn^{-1/4}$, Theorem \ref{wormaldthm} (b) gives that with probability $\ds1-O(n^{1/4} e^{-c^3n^{1/4}})$,
\bea
\mu_{\mathrm{O}}(n,n,p) = n(1-e^{(e^{-c}-1)}) + O(n^{3/4}).
\eea
The extra $\epsilon$ portion of the domain ensures that $z(\tau)$ remains within distance $C \lambda$ of the boundary of $D$.

\qed

\nin \textit{Proof of Theorem \ref{roblivious}.}
The cases of $p(n) = o(1/n)$ and $p(n) = \omega(1/n)$ are given by Lemma \ref{obliviouso} and Lemma \ref{obliviousw}. For the regime $p = c/n$, define the normalized random variables 
\bea
\wt{\mu}_\mr{O}(n,n,c/n) &:=& \frac{\mu_\mr{O}(n,n,c/n)}{n}, \quad
\wt{\mu}^*(n,n,c/n) := \frac{\mu^*(n,n,c/n)}{n}.
\eea
By Theorem 1, $\wt{\mu}_\mr{O}$ converges in probability. Additionally, $\wt{\mu}_\mr{O}$ is bounded and thus uniformly integrable, so convergence in probability implies convergence in mean:
\bea
\lim_{n \rightarrow \infty} \E[\wt{\mu}_{\mr{O}}(n,n,c/n)] = 1-e^{(e^{-c}-1)}.
\eea
Similarly, $\wt{\mu}^*$ must satisfy
\bea
\lim_{n \rightarrow \infty} \E[\wt{\mu}^*(n,n,c/n)] \ge 2- \frac{\gamma^*+\gamma_* + \gamma^* \gamma_*}{c}.
\eea
\qed

\nin \textit{Proof of Corollary \ref{rboundoblivious}.}
The expected maximum matching size is no greater than the expected number of non-isolated vertices on one side of the graph, so $\mc{R}_\mr{O}(c/n) \ge \frac{1-e^{(e^{-c}-1)}}{1-e^{-c}} \ge 1-1/e$. \qed

\subsection{Greedy algorithm.}
The \textsc{greedy} algorithm is shown in Algorithm \ref{algo:greedy}. Upon the arrival of each ball, \textsc{greedy} assigns the ball to a randomly selected neighboring bin that is unmatched. If no such bin exists, the ball is dropped. Recall that the \textsc{ranking} algorithm instead picks an initial ranking of bins and always assigns each ball to the neighboring unmatched bin with highest rank. We begin by showing that \textsc{ranking} and \textsc{greedy} perform equivalently on $G(n,n,p)$. Let $U(j)$ denote the set of unmatched neighboring bins of ball $j$ when it arrives.
\begin{algorithm}
\DontPrintSemicolon
\ForEach{$j \in J$}{
When ball $j$ arrives, let $U(j)$ denote the set of unmatched neighboring bins\;
\If{$|U(j)| >0$}{
Match $j$ to a random bin $i \in U(j)$\;
}
\Else{
$j$ is dropped\;
}
}
\caption{{\sc greedy}}
\label{algo:greedy}
\end{algorithm}

\begin{lma}
The performance of \textsc{greedy} and \textsc{ranking} are equivalent on the graph $G(n,n,p)$.
\label{equiv}
\end{lma}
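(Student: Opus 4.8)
The plan is to show that the number of occupied bins evolves as the \emph{same} Markov chain under both \textsc{greedy} and \textsc{ranking}, so that the final matching size has an identical distribution under the two algorithms. Let $Y(t)$ denote the number of occupied bins immediately before the arrival of ball $t$, with $Y(1)=0$, exactly as in the proof of Theorem~\ref{obliviousthm}. The matching size produced by either algorithm equals the terminal value $Y(n+1)$, since each matched ball occupies exactly one previously empty bin.

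The decisive observation is that both algorithms match an arriving ball $j$ if and only if $U(j) \ne \emptyset$, i.e.\ if and only if $j$ has at least one unmatched neighbor; they differ only in \emph{which} available bin receives the ball, and this choice has no effect on whether $j$ is matched. Accordingly I would compute, for each algorithm, the probability that ball $t$ is matched conditioned on the history $\mathcal{F}_{t-1}$ of the process up to its arrival. Because the edges incident to ball $t$ are drawn independently of all previously revealed edges in $G(n,n,p)$, and because there are exactly $n - Y(t)$ unmatched bins at that moment, this probability is
\[
\PP(\text{ball } t \text{ matched} \mid \mathcal{F}_{t-1}) = 1 - (1-p)^{\,n - Y(t)},
\]
for \textbf{both} algorithms. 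The crucial point is that the right-hand side depends on the history only through $Y(t)$: the \emph{identities} of the occupied bins are irrelevant, only their number matters, by the exchangeability of the fresh Bernoulli$(p)$ edges of ball $t$.

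From this, $\{Y(t)\}$ is a Markov chain with transitions $\PP(Y(t+1)=y+1 \mid Y(t)=y)=1-(1-p)^{\,n-y}$ and $\PP(Y(t+1)=y \mid Y(t)=y)=(1-p)^{\,n-y}$, and these transitions, together with the common initial condition $Y(1)=0$, coincide for \textsc{greedy} and \textsc{ranking}. Hence the two chains are identical in law, and in particular their terminal values $Y(n+1)$---the respective matching sizes---have the same distribution, which establishes the claimed equivalence.

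I expect the main obstacle to be justifying cleanly that conditioning on the full history leaves the one-step transition depending only on $Y(t)$. For \textsc{ranking} the per-step choices are correlated through the fixed random permutation, which is part of the initial randomness and hence of $\mathcal{F}_{t-1}$, so conditioning reveals information about that permutation and about \emph{which} bins are occupied; the argument must make explicit that this information constrains only the identity of the occupied set and never the transition law of its cardinality. This is precisely where the independence structure of $G(n,n,p)$ is essential---the edges of ball $t$ are independent of $\mathcal{F}_{t-1}$---and it is the feature absent in the worst case, where \textsc{ranking} strictly outperforms \textsc{greedy}.
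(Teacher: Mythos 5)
Your proposal is correct and follows essentially the same route as the paper: the paper's proof likewise observes that a ball is matched if and only if it has an unmatched neighbor, and that since the ball's edges are generated independently of the existing graph, the match probability at each step depends only on the number of occupied bins (not their identities or the ranking), so both algorithms induce the same law on the matching size. Your version merely formalizes this as an explicit Markov chain on $Y(t)$ with common transition probabilities, which is a welcome but not substantively different elaboration.
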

\begin{proof}
Consider the evolution of an instance of the \textsc{ranking} algorithm. Upon the arrival of each ball $j$, a match does not occur only if the bins $N(j)$ are all occupied or if $N(j) = \emptyset $. Immediately before the edges for $j$ are revealed, the probability of such an event depends only on the number occupied bins and is independent of the bin ranking, as edges are generated independently of the existing graph. It follows that the matches obtained by \textsc{greedy} and \textsc{ranking} are determined only as a function of the number of available bins.
\qed
\end{proof}


It is clear that on extremely sparse graphs, \textsc{greedy} should perform near optimally, as \textsc{oblivious} does. The \textsc{greedy} algorithm should also perform well on very dense graphs because many bins are available to each ball. We formalize this behavior with the following two lemmas.

\begin{lma}
For all valid functions $p(n)$ where $p(n) = o(1/n)$, the performance of \textsc{greedy} on $G(n,n,p)$ with $p=p(n)$ satisfies $\mathcal{R}_{\mathrm{G}}(p(n)) = 1$.
\end{lma}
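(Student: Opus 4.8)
The plan is to mirror the proof of Lemma \ref{obliviouso} almost verbatim, exploiting the fact that \textsc{greedy} is guaranteed to match every isolated edge. Recall that an edge $(i,j)$ is isolated if $N(i) = \{j\}$ and $N(j) = \{i\}$, and that the probability a fixed bin-ball pair forms an isolated edge is $p(1-p)^{2n-2}$.

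First I would observe that \textsc{greedy} matches every isolated edge. When the ball $j$ of an isolated edge $(i,j)$ arrives, its only neighbor is bin $i$, and bin $i$ must be unmatched, since $j$ is the sole ball adjacent to $i$ and hence no earlier arrival could have occupied it. Thus $U(j) = \{i\}$ and \textsc{greedy} is forced to match $j$ to $i$. This gives exactly the same lower bound as in the oblivious case, namely $\E[\mu_{\mathrm{G}}] \ge n^2 p (1-p)^{2n-2}$, where $\mu_{\mathrm{G}}$ is the size of the \textsc{greedy} matching.

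Next, since the expected maximum matching size is bounded above by the expected number of edges $n^2 p$, dividing yields
\[
\frac{\E[\mu_{\mathrm{G}}]}{\E[\mu^*]} \ge (1-p)^{2n-2} = \left(e^{-o(\frac{1}{n})}\right)^{2n-2} = e^{-o(1)} \rightarrow 1.
\]
Because the performance ratio can never exceed $1$ (the matching size is always at most the maximum matching size), I conclude that $\mathcal{R}_{\mathrm{G}}(p(n)) = 1$.

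There is essentially no obstacle in this regime: the only point requiring care is the verification that \textsc{greedy} necessarily matches an isolated edge, which follows immediately from the observation that the lone neighboring bin cannot have been claimed by any other ball. One could alternatively try to argue that \textsc{greedy} dominates \textsc{oblivious} and invoke Lemma \ref{obliviouso} directly, but a pathwise domination is not immediate, since the two algorithms occupy different sets of bins as the process evolves; the direct isolated-edge bound sidesteps this complication entirely.
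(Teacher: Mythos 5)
Your proof is correct and follows essentially the same route as the paper, which simply notes that the isolated-edge argument from the \textsc{oblivious} case (Lemma \ref{obliviouso}) carries over verbatim to \textsc{greedy}. Your explicit verification that \textsc{greedy} must match every isolated edge is exactly the detail the paper leaves implicit.
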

\begin{proof}
The proof for Lemma \ref{obliviouso} holds. \qed
\end{proof}

\begin{lma}
For all valid functions $p(n)$ where $p(n) = \omega(1/n)$, the performance of \textsc{greedy} on $G(n,n,p)$ with $p=p(n)$ satisfies $\mathcal{R}_{\mathrm{G}}(p(n)) = 1$.
\end{lma}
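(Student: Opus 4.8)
The plan is to bypass Wormald's theorem entirely and instead bound the expected number of dropped balls by a direct, essentially deterministic computation. The key observation is that under \textsc{greedy} a ball is dropped precisely when it has \emph{no} unmatched neighbor, and whether this occurs depends only on the arriving ball's edges to the currently unmatched bins, which are generated independently of the history. Let $Y(t)$ denote the number of occupied bins immediately before the $t$-th ball arrives, and let $U(t) = n - Y(t)$ be the number of unmatched bins. Since the $t$-th ball connects to each unmatched bin independently with probability $p$, its conditional drop probability is exactly $(1-p)^{U(t)}$; and if it has at least one unmatched neighbor it is matched. Note that \textsc{greedy}'s random choice among unmatched neighbors only affects \emph{which} bin is filled, not the count, so it does not enter the drop probability.

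First I would record the crude deterministic bound $U(t) \ge n - t + 1$: at most $t-1$ bins can be occupied before the $t$-th arrival, since each of the first $t-1$ balls occupies at most one bin, whence $Y(t) \le t-1$. Because $(1-p)^x$ is decreasing in $x$ for $0 < p < 1$, this gives $\PP(\text{ball } t \text{ dropped}) = \E[(1-p)^{U(t)}] \le (1-p)^{\,n-t+1}$ for every $t$. Summing the resulting geometric series over $t = 1,\ldots,n$ and letting $D$ be the total number of dropped balls then yields $\E[D] \le \sum_{s=1}^{n} (1-p)^s \le (1-p)/p < 1/p$.

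Next I would convert this into a statement about the performance ratio. The number of balls matched by \textsc{greedy} equals $n - D$, so $\E[\mu_{\mathrm{G}}(n,n,p)] \ge n - 1/p$, while the maximum matching in a balanced bipartite graph always satisfies $\mu^* \le n$, hence $\E[\mu^*] \le n$. Combining these two estimates gives $(n - 1/p)/n \le \E[\mu_{\mathrm{G}}]/\E[\mu^*] \le 1$. Since $p = p(n) = \omega(1/n)$ means $1/(pn) \to 0$, the left-hand side tends to $1$, and the squeeze forces $\mathcal{R}_{\mathrm{G}}(p(n)) = 1$.

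The main obstacle is essentially conceptual rather than technical: recognizing that the trivial deterministic lower bound $U(t) \ge n - t + 1$, which ignores all of the matching dynamics, already suffices, so that no fixed-point or differential-equation analysis is required. The geometric tail $\sum_{s} (1-p)^s = O(1/p)$ is automatically $o(n)$ precisely because $pn \to \infty$ in this regime. The only point demanding a little care is verifying that the drop probability $(1-p)^{U(t)}$ is unaffected by \textsc{greedy}'s internal randomization, which follows from the independence of the $t$-th ball's edges from the past.
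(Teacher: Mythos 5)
Your proposal is correct and is essentially identical to the paper's own argument: the paper likewise uses the crude bound that at most $t-1$ bins are occupied when ball $t$ arrives, so $\PP(\text{ball } t \text{ matched}) \ge 1-(1-p)^{n-t+1}$, sums the geometric series to get $\E[\mu_{\mathrm{G}}] \ge n - 1/p$, and compares against $\E[\mu^*] \le n$. Your additional remarks on conditioning and the irrelevance of \textsc{greedy}'s internal randomization are sound but just make explicit what the paper leaves implicit.
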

\begin{proof}
We use a crude lower bound for the probability that each ball is matched. When ball $t$ arrives, at most $t-1$ bins are occupied, so
\bea
\PP(\mathrm{ball~} t \mathrm{~is~matched}) \ge 1-(1-p)^{(n-t+1)}.
\eea
Let $\mu_\mathrm{G}$ denote the matching size obtained by \textsc{greedy}. Then
\bea
\E[\mu_{\mathrm{G}}] \ge n-\sum_{t = 1}^n (1-p)^{(n-t+1)} = n - \frac{(1-(1-p)^n)(1-p)}{p}  \ge  n-\frac{1}{p}.
\eea
The maximum matching size is at most $n$, so
\bea
\frac{\E[ \mu_\mathrm{G}]}{\E[\mu^*]} &\ge& 1-\frac{1}{np(n)} = 1-\frac{1}{\omega(1)} \rightarrow 1.
\eea
\qed
\end{proof}

We now prove the result for the matching size obtained by \textsc{greedy} in the case $p=c/n$ (Theorem \ref{greedythm}) and follow with proofs of Theorem \ref{rgreedy} and Corollary \ref{rboundgreedy}.~\\

\nin\textit{Proof of Theorem \ref{greedythm}.} Again let $Y(t)$ denote the number of occupied bins immediately before the $t^\mathrm{th}$ arrival. Conditioning on $Y(t)$, the $t^\mathrm{th}$ ball cannot be assigned only if edges connecting the ball to the $n-Y(t)$ neighboring bins are not present. The probability of a match is then
\bea
\PP(\mathrm{ball~} t \mathrm{~is~matched}| Y_t)&=& 1-(1-p)^{n \left (1-\frac{Y(t)}{n} \right )} \n \\
&=& \E[Y(t+1)-Y(t)|Y(t)].
\eea
Normalizing as before, we obtain
\bea
\frac{\E[Z(\tau+1/n)-Z(\tau)|Z(\tau)]}{1/n} &=& 1- \left (1-\frac{c}{n} \right )^{n(1-Z(\tau))} \n \\
&=& 1-e^{-c(1-Z(\tau))} + o(1).
\eea
The corresponding differential equation for $n \rightarrow \infty$ is 
\bea
\frac{d z(\tau)}{d \tau} = 1-e^{-c(1-z(\tau))}.
\eea
By integration and the initial value for $z(\tau)$,
\be
z(\tau) = 1-\frac{\log{\left (1+e^{-c\tau}(e^c-1) \right )}}{c}, \quad 0 \le \tau \le 1.
\ee
The application of Wormald's theorem is the same as the proof for \textsc{oblivious} but with Lipschitz constant $L = ce^{c\epsilon}$.
\qed \\

\nin\textit{Proof of Theorem \ref{rgreedy}}. The proof follows the same approach as the proof of Theorem  \ref{roblivious}.
\qed

~\\
\nin\textit{Proof of Corollary \ref{rboundgreedy}.}
The approach for Corollary \ref{rboundoblivious} can be used to show that the property holds for small $c$ values (e.g. $c<1/2$) and large $c$ values (e.g. $c>5$). For the remaining region, the exact expression for the performance ratio can be minimized numerically; the minimum is obtained at $c^* = 3.1685009$ and $\mc{R}_\mr{G}(c^*) = 0.8370875$.
\qed

\section{Vertex-Weighted Matching}
In this section we modify the setting so that each bin $i$ has a \textit{rank}, $r_i \in \{1,2,\ldots, m \}$ for some constant $m$, and that matching balls to lower ranked bins is preferable. This is equivalent to allowing each bin to have one of $m$ distinct weights $w_i \in \R$, where a larger weighted bin has a lower rank.
The \textsc{vertex-weighted greedy} algorithm matches each ball to its neighboring bin with smallest rank, as shown in Algorithm \ref{vertexweightedmatching}.

\label{vertexweightedmatching}
\begin{algorithm}
\DontPrintSemicolon
\ForEach{$j \in J$}{
When ball $j$ arrives, let $U(j)$ denote the set of unmatched neighboring bins\;
\If{$|U(j)| >0$}{
Match $j$ to a neighboring bin $\ds i \in \argmin_{i \in N(j)} r_i$\;
}\Else{
$j$ is dropped\;

}
}
\caption{{\sc vertex-weighted greedy}}
\label{algo:weighted}
\end{algorithm}
\nin Let $n_r$ denote the total number of bins with rank $r$, where $\sum_r n_r = n$, and let $g_r := n_r/n$. The result of this section is as follows.

\begin{thm}
\label{weightedthm}
Let $\mu^r_{\mathrm{W}}(n,n,c/n)$ denote the number of bins with rank $r$  that are matched by \textsc{vertex-weighted greedy} on the graph $G(n,n,p)$, where $p=c/n$ and $c>0$ is a constant. Then a.a.s.,
\bea
\frac{\mu^r_{\mathrm{W}}(n,n,c/n)}{n} = g_r - \frac{1}{c} \log \left ( \frac{1+e^{-c\tau}(e^{c\sum_{k=1}^{r} g_k} -1)}{1+e^{-c\tau}(e^{c\sum_{k=1}^{r-1} g_k} -1)} \right ) + o(1), \quad 1 \le r \le m.
\eea
\end{thm}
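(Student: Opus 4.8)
The plan is to mimic the scalar \textsc{greedy} analysis of Theorem \ref{greedythm} but now track an $m$-dimensional state, one coordinate per rank, and then apply the multidimensional form of Theorem \ref{wormaldthm} with $a = m$. First I would let $Y_r(t)$ denote the number of occupied rank-$r$ bins immediately before the arrival of the $t^{\mr{th}}$ ball, so the state is the vector $(Y_1(t),\ldots,Y_m(t))$. The key observation is that, because the algorithm always takes the lowest-ranked available (unmatched) neighbor, ball $t$ is matched to a rank-$r$ bin precisely when it has no unmatched neighbor of rank strictly less than $r$ but has at least one unmatched neighbor of rank $r$. There are $\sum_{k=1}^{r-1}(n_k - Y_k(t))$ unmatched bins of rank below $r$ and $n_r - Y_r(t)$ of rank exactly $r$, and all edges appear independently, so
\bea
\E[Y_r(t+1)-Y_r(t)\mid \mathbf{Y}(t)] = (1-p)^{\sum_{k=1}^{r-1}(n_k-Y_k(t))}\left(1-(1-p)^{n_r-Y_r(t)}\right).
\eea

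Next I would normalize by setting $Z_r(\tau) = Y_r(n\tau)/n$ and using $n_k - Y_k(t) = n(g_k - Z_k(\tau))$. Applying Lemma \ref{ebound} coordinatewise to replace each $(1-c/n)^{n(\cdot)}$ by the corresponding exponential, with error $o(1)$, yields in the limit the coupled system
\bea
\frac{dz_r}{d\tau} = e^{-c\sum_{k=1}^{r-1}(g_k - z_k)}\left(1-e^{-c(g_r - z_r)}\right),\quad 1 \le r \le m,
\eea
with initial conditions $z_r(0)=0$. Solving this system is the crux of the argument.

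The decoupling trick I would use is to pass to the cumulative unmatched fractions $u_r(\tau) := \sum_{k=1}^{r}(g_k - z_k(\tau))$, with $u_0 \equiv 0$ and $G_r := \sum_{k=1}^r g_k$, so that $u_r(0) = G_r$. Summing the right-hand sides over $k = 1,\ldots,r$ telescopes: writing $d_k = g_k - z_k$, the $k^{\mr{th}}$ summand equals $e^{-c u_{k-1}} - e^{-c u_k}$, so $\sum_{k=1}^r \dot z_k = 1 - e^{-c u_r}$. Hence each $u_r$ obeys the \emph{decoupled} scalar ODE $\dot u_r = e^{-cu_r}-1$, which is exactly the equation solved in the proof of Theorem \ref{greedythm} (under $v = 1-z$), only with initial value $G_r$ rather than $1$. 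Separating variables gives $u_r(\tau) = \frac{1}{c}\log\!\left(1+e^{-c\tau}(e^{cG_r}-1)\right)$, and recovering $z_r = g_r - (u_r - u_{r-1})$ produces exactly the claimed expression.

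Finally I would invoke Theorem \ref{wormaldthm} with $a=m$ on the domain $D$ taken to be the box $(-\epsilon,1+\epsilon)^{m+1}$. As in the scalar cases, $C_0 = 1$ and the boundedness hypothesis holds with $\beta = 1$ since at most one coordinate changes, by one, per arrival; the trend hypothesis holds with $\lambda_1 = O(1/n)$ by Lemma \ref{ebound}; and each $f_r$ is a smooth function of the bounded state with bounded partial derivatives on $D$, hence Lipschitz with a common constant. Theorem \ref{wormaldthm}(b) then gives a.a.s. concentration of each $Y_r(n\tau)/n$ around $z_r(\tau)$ uniformly in $\tau$, which is the stated result. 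I expect the only genuinely nontrivial step to be the telescoping decoupling above; once the system is recognized as $m$ independent copies of the \textsc{greedy} ODE, the integration and the Wormald verification are routine adaptations of the scalar proofs, the only new bookkeeping being the uniform Lipschitz constant across coordinates.
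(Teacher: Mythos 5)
Your proposal is correct and follows essentially the same route as the paper: the same per-rank expected-increment computation, the same system of ODEs for $z_r$, and the same application of Theorem \ref{wormaldthm} with $a=m$. The only difference is that you actually derive the closed-form solution via the telescoping substitution $u_r=\sum_{k\le r}(g_k-z_k)$ reducing to the scalar \textsc{greedy} ODE, whereas the paper merely asserts the solution ``can be verified''; your derivation is a useful addition, and your initial condition $z_r(0)=0$ is the one actually consistent with the displayed solution (the paper's stated $z_r(0)=g_r$ appears to be a slip).
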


\begin{proof}
%
We use the notation $Y_r(t),~1\le r \le m$, to denote the number of rank $r$ bins that are occupied immediately prior to the arrival of ball $t$. Consider the probability that a given ball $t$ is matched to a rank two bin. This occurs if edges connecting to the $n_1 - Y_1(t)$ are not present and there is at least one neighboring bin with rank two, so
\bea
\E[ Y_2(t+1) - Y_2(t) | \mathbf{Y}(t)] = \left ( (1-p)^{n \left (g_1 - \frac{Y_1(t)}{n} \right )} \right ) \left ( 1-(1-p)^{n\left (g_2 - \frac{Y_2(t)}{n}\right )}
 \right ),
\eea
where $\mathbf{Y}(t) = (Y_1(t), Y_2(t), \ldots Y_a(t))$. Generalizing, for $1 \le r \le m$,
\bea
\E[ Y_r(t+1) - Y_r(t) | \mathbf{Y}(t)] &=& \left ( 1-(1-p)^{n \left (g_r - \frac{Y_r(t)}{n} \right )} \right ) \prod_{k =1}^{r-1} \left ( (1-p)^{n\left (g_k - \frac{Y_k(t)}{n}\right)} \right ).
\eea
After normalizing and substituting $p=c/n$,
\bea
\frac{\E[Z_r(\tau+1/n)-Z_r(\tau)|\mathbf{Z}(\tau)]}{1/n}&=& \left(1-e^{-c (g_r- Z_r(\tau))} \right) e^{-c\sum_{k=1}^{r-1}\left (g_k - Z_k(\tau) \right)} + o(1).
\eea
We arrive at the following system of differential equations for $n \rightarrow \infty:$
\bea
\frac{dz_r(\tau)}{d \tau} = \left(1-e^{-c (g_r - z_r(\tau))} \right) e^{-c\sum_{k=1}^{r-1}\left (g_k - z_k(\tau) \right)}, ~\quad 1 \le r \le m.
\label{diffeqsys}
\eea
It can be verified that the solution to the system of differential equations with initial conditions $z_r(0) = g_r$, $1 \le r \le m$, is
\bea
z_r(\tau) = g_r - \frac{1}{c} \log \left ( \frac{1+e^{-c\tau}(e^{c\sum_{k=1}^{r} g_k} -1)}{1+e^{-c\tau}(e^{c\sum_{k=1}^{r-1} g_k} -1)} \right ), \quad 0 \le \tau \le 1,
\eea
for $1 \le r \le m$.
The application of Wormald's theorem is the same as the application of \textsc{oblivious} and \textsc{greedy}, but with Lipschitz constant $L = (a-1)(1-e^{-c(1+\epsilon)})ce^{(a-1)c\epsilon}+ce^{ac\epsilon}$.
\qed
\end{proof}

\section{Conclusion}
\label{conclusion}
We have determined lower bounds on the performance of greedy matching algorithms on $G(n,n,p(n))$ that hold across all monotone functions $p(n)$, and shown that a clear phase transition occurs at $p(n) \sim 1/n$. In particular, $p(n) \sim 1/n$ is where the performance ratio of the \textsc{oblivious} algorithm transitions from $1$ to $1-1/e$ and the performance ratio of \textsc{greedy} passes through its global minimum. 

The greedy algorithms perform relatively well: it is interesting that \textsc{oblivious}, which is clearly inferior to \textsc{ranking}, has a minimum performance ratio that is equal to the best possible worst-case competitive ratio of any online matching algorithm (i.e. that of \textsc{ranking}). Likewise, the lower bound of 0.837 on the performance ratio of \textsc{greedy} is surprisingly high. Our analysis is tight inasmuch as the bound on the expected maximum matching size is tight (specifically Theorem \ref{matchopt}); we conjecture that this theorem is in fact tight for all $c>0$. Even if this is not the case, the lower bound on the performance of \textsc{greedy} cannot be any greater than $0.845$, where the bound is tight for $c=e$.

There appears to be a close relationship between matching behavior on $G(n,n,p)$ and $G(n,p)$, where results differ by a mere factor of two. As we have mentioned, this relationship occurs with our asymptotic result for \textsc{greedy} (Theorem \ref{greedythm}) and the result of \textsc{modified greedy} given in \cite{dyer93}. If Theorem \ref{matchopt} is in fact tight for all $c>0$, the same can be said regarding a result given in \cite{aronson98}. A similar observation was also made by Frieze with respect to differential equations for the Karp-Sipser algorithm on bipartite and ordinary graphs \cite{frieze05}. Note that it is not sufficient to simply argue that these properties hold for the mere fact that bipartite graphs have twice as many vertices; there are indeed important structural differences between ordinary and bipartite graphs.

A variety of possibilities exist for further research. We have allowed the existence of isolated vertices in our analysis, which is unlikely to be realistic for many applications. This could be resolved by imposing a restriction on the minimum degree of vertices, as was done in \cite{frieze04}. Unbalanced bipartite graphs (i.e. with more vertices on one side) are likely to be encountered in practice -- our approach can be used in this situation, but less is known about expected maximum matching size here. Stepping away from the online scenario, one could consider multiple pass algorithms, which are highly relevant to streaming models.

\bibliographystyle{siam}
\bibliography{greedy_bip}

\end{document}